\newif\ifconf
\newif\iftr
\newif\ifnblnd
\renewcommand\footnotetextcopyrightpermission[1]{}
\newif\ifsq     
\newif\ifsqCAP
\newif\ifsqVS
\newif\ifsqEN
\newif\ifsqTIT
\newcommand{\vspaceSQ}[1]{\ifsqVS\vspace{#1}\fi}
\newcommand{\enlargeSQ}[1]{\ifsqEN\enlargethispage{\baselineskip}\fi}
\titlespacing*{\section}{0pt}{6pt}{3pt}
\titlespacing*{\subsection}{0pt}{4pt}{2pt}
\titlespacing*{\subsubsection}{0pt}{2pt}{3pt}
\crefname{section}{§}{§§}
\Crefname{section}{§}{§§}
\definecolor{darkgrey}{RGB}{70,70,70}
\definecolor{lightgrey}{RGB}{200,200,200}
\bfseries\color{black!400!black},
\newcommand{\goal}[1]{\noindent\textcolor{red}{[Goal: #1]}\par}
\newcommand{\macb}[1]{\textbf{\textsf{#1}}}
\newcommand\rwh[1]{%
\savestack{\tmpbox}{\stretchto{%
  \scaleto{%
      \scalerel*[\widthof{\ensuremath{#1}}]{\kern-.6pt\bigwedge\kern-.6pt}%
          {\rule[-\textheight/2]{1ex}{\textheight}}
            }{\textheight}%
}{0.5ex}}%
\stackon[1pt]{#1}{\tmpbox}%
}
\def\HiLiGA{\leavevmode\rlap{\hbox to \hsize{\color{black!10}\leaders\hrule height 1\baselineskip depth 1ex\hfill}}}
\def\HiLiGB{\leavevmode\rlap{\hbox to \hsize{\color{black!25}\leaders\hrule height 1\baselineskip depth 1ex\hfill}}}
\def\HiLiGC{\leavevmode\rlap{\hbox to \hsize{\color{black!40}\leaders\hrule height 1\baselineskip depth 1ex\hfill}}}
\def\HiLiGD{\leavevmode\rlap{\hbox to \hsize{\color{black!55}\leaders\hrule height 1\baselineskip depth 1ex\hfill}}}
\def\HiLiGE{\leavevmode\rlap{\hbox to \hsize{\color{black!70}\leaders\hrule height 1\baselineskip depth 1ex\hfill}}}
\def\HiLiGF{\leavevmode\rlap{\hbox to \hsize{\color{black!85}\leaders\hrule height 1\baselineskip depth 1ex\hfill}}}
\renewcommand{\goal}[1]{}
\definecolor{vlgray}{rgb}{0.77 0.77 0.77}
\definecolor{ablack}{rgb}{0.2 0.2 0.2}
\tikzstyle{comment} = [draw, fill=blue!70, text=white, text width=3cm, minimum height=1cm, rounded corners, align=left, font=\scriptsize]
\tikzstyle{background_alg} = [draw, fill=blue!20, opacity=0.4, inner sep=4pt, rounded corners=2pt]
\begin{document}

\title{Higher-Order Graph Databases}
\renewcommand{\shorttitle}{Higher-Order Graph Databases}


\ifconf
\ifnblnd
\author{Maciej Besta}
\authornote{Corresponding author.}
\affiliation{%
  \institution{ETH Zurich}
  \country{Switzerland}
}

\author{Shriram Chandran}
\affiliation{%
  \institution{ETH Zurich}
  \country{Switzerland}
}

\author{Jakub Cudak}
\affiliation{%
  \institution{AGH-UST}
  \country{Poland}
}

\author{Patrick Iff}
\affiliation{%
  \institution{ETH Zurich}
  \country{Switzerland}
}

\author{Marcin Copik}
\affiliation{%
  \institution{ETH Zurich}
  \country{Switzerland}
}

\author{Robert Gerstenberger}
\affiliation{%
  \institution{ETH Zurich}
  \country{Switzerland}
}

\author{Tomasz Szydlo}
\affiliation{%
  \institution{AGH-UST and NCL UK}
  \country{Poland, UK}
}

\author{Jürgen Müller}
\affiliation{%
  \institution{BASF SE}
  \country{Germany}
}

\author{Torsten Hoefler}
\affiliation{%
  \institution{ETH Zurich}
  \country{Switzerland}
}
\renewcommand{\shortauthors}{M. Besta et al.}
\fi
\else
\author{Maciej Besta$^{1*\dagger}$,
Shriram Chandran$^{1* \dagger}$,
Jakub Cudak$^{2}$,
Patrick Iff$^{1}$,
Marcin Copik$^{1}$,\\
Robert Gerstenberger$^{1}$,
Tomasz Szydlo$^{2,3}$,
Jürgen Müller$^{4}$,
Torsten Hoefler$^{1}$}
\affiliation{\vspace{0.3em}$^1$ETH Zurich;
$^2$AGH-UST;
$^3$NCL UK;
$^4$BASF SE \\
\country{{$^*$}Corresponding authors}
\country{{$^\dagger$}Alphabetical order}}

\renewcommand{\shortauthors}{M. Besta et al.}
\fi

\begin{abstract}
Recent advances in graph databases (GDBs) have been driving interest in large-scale analytics, yet current systems fail to support higher-order (HO) interactions beyond first-order (one-hop) relations, which are crucial for tasks such as subgraph counting, polyadic modeling, and HO graph learning. We address this by introducing a new class of systems, higher-order graph databases (HO-GDBs) that use lifting and lowering paradigms to seamlessly extend traditional GDBs with HO. We provide a theoretical analysis of OLTP and OLAP queries, ensuring correctness, scalability, and ACID compliance. We implement a lightweight, modular, and parallelizable HO-GDB prototype that offers native support for hypergraphs, node-tuples, subgraphs, and other HO structures under a unified API. The prototype scales to large HO OLTP \& OLAP workloads and shows how HO improves analytical tasks, for example enhancing accuracy of graph neural networks within a GDB by 44\%. Our work ensures low latency and high query throughput, and generalizes both ACID-compliant and eventually consistent systems.
\end{abstract}

\iftr
\ccsdesc[500]{Information systems~Graph-based database models}
\ccsdesc[300]{Computing methodologies~Neural networks}
\fi

\maketitle

{\noindent\macb{Code:} \url{https://github.com/spcl/HO-GDB}}

\section{INTRODUCTION}
\label{sec:intro}

Graph databases (GDBs)~\cite{besta2023demystifying} are a specialized class of data management systems designed for storing, querying, and analyzing datasets structured as graphs. Unlike relational or document-oriented databases, GDBs natively support complex relationships between entities, enabling expressive queries over interconnected data. Modern systems such as Neo4j~\cite{neo4j_book}, TigerGraph~\cite{tiger_graph_links}, and JanusGraph~\cite{janusgraph} use rich data models like the Labeled Property Graph (LPG)~\cite{angles2017foundations} or Resource Description Framework (RDF)~\cite{rdf_links} to capture semantic information in graph-structured data through node and edge labels, properties, and types. GDBs have seen widespread adoption in domains such as computational biology, recommendation systems, and fraud detection, where data is inherently relational and large-scale. 

\begin{figure*}[th]
    \centering
    \vspaceSQ{-0.5em}
    \includegraphics[width=1.0\textwidth]{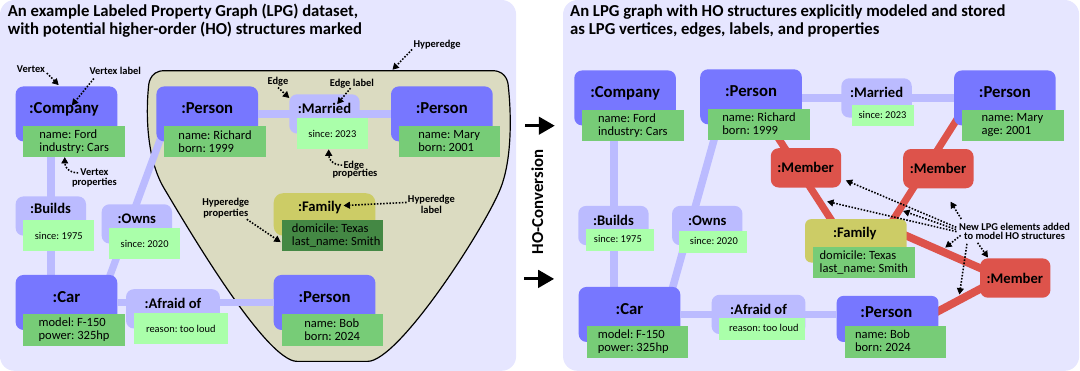}
    \vspace{-2em}
    \caption{\textmd{\textbf{Left side:} An example Labeled Property Graph (LPG) with added higher-order (HO) structures. \textbf{Right side:} the same LPG with its HO structures modeled as LPG vertices, edges, labels, and properties. This enables these HO structures to be stored and used seamlessly in most graph databases.}}
    \label{fig:example}
    \vspaceSQ{-0.5em}
\end{figure*}

\begin{figure*}[h!t]
    \centering
    \vspaceSQ{-1em}
    \includegraphics[width=0.95\textwidth]{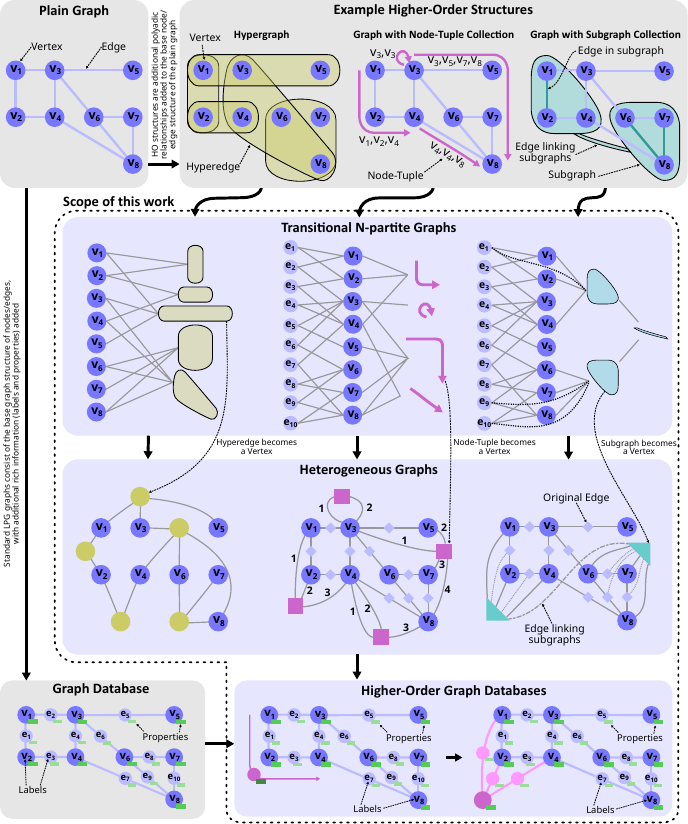}
    \vspace{-1em}
    \caption{\textmd{Overview of the introduced HO-GDB paradigm and new class of systems. \textbf{Top row:} An example plain (i.e. non-HO) graph and the same graph with example HO structures. \textbf{Top-mid row (``Transitional N-partite Graphs''):} an intermediate stage of translating the HO structures into a format digestible by any GDB. \textbf{Bottom-mid row (``Heterogeneous Graphs''):} the result of the ``lowering'' transformation that encodes HO structures as heterogeneous graphs, which are a special simplified case of the LPG data model. The \textbf{bottom row:} the combination of a standard LPG dataset (bottom-left) and the generated heterogeneous graphs that encode HO, gives the HO-enriched LPG dataset that forms the basis for higher-order graph databases (HO-GDBs). The colors of nodes and edges indicate their respective label with entities having the same color also sharing the same label.}}
    \label{fig:scheme}
\end{figure*}

Simultaneously, recent advancements in network science and graph representation learning have led to a surge of interest in higher-order (HO) graph analytics, which extend beyond pairwise (dyadic) relationships to model polyadic interactions~\cite{Alvarez_Rodriguez_2021}. HO graphs encompass several constructs, including hypergraphs, simplicial and cell complexes, motifs and motif-based graphs, subgraph collections, and nested graph structures~\cite{Bick_2023}. These models enable the representation of complex multi-way interactions such as co-authorship cliques, multi-drug interactions, friend groups, or nested molecular structures, which cannot be fully captured using binary edges alone. For example, simplicial complexes have been used in finite element mesh computations (e.g., fluid dynamics or structural analysis simulations)~\cite{hoffman2016fenics}, social network analyses (e.g., modeling group influence on adoption of behaviors), sensor networks (e.g., ensuring that a set of sensors covers a given area~\cite{muhammad2007dynamic}), computational biology (e.g., computing alpha complexes~\cite{benkaidali2014computing}), material science (e.g., computing persistent homology), or in vision (e.g., quantifying holes in shapes). Moreover, the ability to express such higher-order constructs as first-class citizens in a graph (i.e., attribute them with properties, and link them via edges) enables rich modeling of semantically meaningful structures. For instance, customer buying patterns or manufacturing incidents can be represented as HO constructs annotated with contextual metadata or connected to ontologies and similarity graphs. HO graph models are increasingly critical for achieving state-of-the-art performance in graph neural networks (GNNs)~\cite{xu2018powerful}, enabling expressiveness beyond the Weisfeiler–Lehman test~\cite{weisfeiler1968reduction} and mitigating issues like oversquashing or indistinguishability in message-passing architectures. Yet, this progress in analytics and learning over HO data remains largely disconnected from GDB system architectures and HPC in general.


Existing GDBs offer no native support for HO graph models, either in their core data model, transactional logic, or query execution engines. As a result, HO graph structures must be flattened or emulated within first-order schemas, losing essential semantics and increasing complexity for both developers and systems. This gap raises several questions: How can we extend graph data models in GDBs to natively support higher-order relationships? How can transactional guarantees and indexing structures be redefined to accommodate HO structures and their complex interdependencies? What programming abstractions and query languages are required to express a computation over subgraphs, simplices, or hyperedges? Finally, how can we achieve scalability for higher-order workloads that mix transactional access with higher-order analytical and learning pipelines?

We answer all the above questions by introducing a unified framework for integrating HO structures into graph databases, which lays the foundation for a new class of GDB systems: the Higher-Order Graph Database (HO-GDB) (\textbf{contribution~1}). Our central idea is to reformulate HO graph constructs, such as node tuples, hyperedges, simplices, or subgraphs, as instances of heterogeneous graphs~\cite{sun2021heterogeneoushypergraphembeddinggraph, 9354594}, which are a natural special case of Labeled Property Graphs~\cite{angles2017foundations} that allow for multiple node and edge types. Heterogeneous graphs are widely supported as first-class citizens in graph database systems, and we use this flexible model to encode different forms of higher-order interactions as specifically labeled entities within an LPG schema. This approach enables a direct and expressive mapping of higher-order structures into existing GDB engines, preserving transactional semantics and providing an interface to query. The results of our analysis enable native support for higher-order, polyadic, nested, and/or structured graph data at the data model level without requiring flattening or external processing.

\enlargeSQ

We develop HO-GDB, a lightweight, modular, parallelizable architecture built on top of Neo4j, for a higher-order graph database that ensures performance and correctness guarantees across complex HO workloads \textbf{(contribution~2}). Our architecture extends the core components of a GDB, such as the storage layer, indexing subsystem, and query execution engine, to similarly handle higher-order constructs transparently. We formalize the transactional semantics for HO data, including the isolation and atomicity of operations. To support efficient query execution and mitigate latency with the underlying GDB, we also introduce motif-level parallelism, allowing for concurrent processing of different HO structures, boosting overall system throughput. Our implementation demonstrates that HO-GDBs can scale to realistic workloads, integrate with existing standard graph ecosystems, and serve as a practical foundation for transactional and analytical operations over higher-order graph data.

We further demonstrate the expressiveness and analytical advantages of HO-GDBs by presenting a set of novel workloads enabled by HO  (\textbf{contribution~3}), such as path traversal over HO graphs and end-to-end integration with HO graph neural networks (HO GNNs) that operate natively on subgraph-based structures stored within the database. This allows for query-driven HO GNN training, efficient substructure sampling, integrated HO pattern mining, and others. 

\enlargeSQ

Finally, we empirically evaluate our HO-GDB prototype across selected workloads, demonstrating scalability and practical performance, supported by a rigorous theoretical analysis of performance and storage overhead (\textbf{contribution~4}). We benchmark our system using real-world datasets and measure latency and throughput for both online transactional (OLTP) and online analytical (OLAP) operations. Our evaluation confirms that HO-GDB can scale to complex HO graphs with hundreds of thousands of entities, and sustain low-latency query execution even under complex higher-order workloads. We demonstrate significant gains in expressiveness for graph-based learning over HO data. We conclude that HO graph databases not only generalize the current generation of graph databases but also open up new opportunities for system-level innovation, data modeling, and integrated machine learning in the graph data management landscape.




\begin{figure*}[t]
    \centering
    \vspaceSQ{-1em}
    \includegraphics[width=1.0\textwidth]{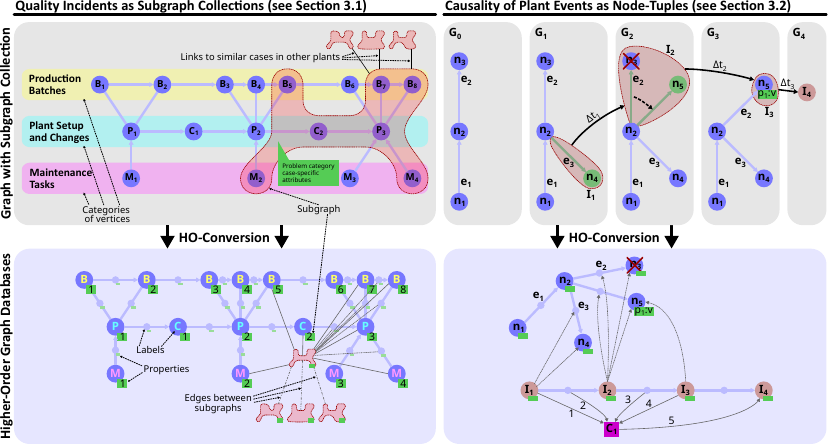}
    \vspace{-2em}
    \caption{An illustration of two higher-order (HO) practical use cases: (\textbf{left}) quality incidents as subgraph collections (details in Section~\ref{sec:use-case-incidents}), and (\textbf{right}) causality of processing plant events as node-tuples (details in Section~\ref{sec:use-case-causality}).}
    \label{fig:use-case}
\end{figure*}

\section{BACKGROUND}
\label{sec:background}

Our work combines concepts from two key domains: graph databases and higher-order graph data models.

\subsection{Graph Databases}

GDBs are specialized database systems designed to efficiently store, query, and process graph-structured data~\cite{angles2018introduction, davoudian2018survey, besta2023demystifying}. Unlike traditional relational databases, which use tables and rows to represent data, GDBs use nodes and edges to capture relationships in the data explicitly. This allows for efficient graph-based traversal operations, making them well-suited for datasets with complex relationships, such as social networks, knowledge graphs, and recommendation systems.



The established graph data model for native GDBs is the \textbf{Labeled Property Graph (LPG)}~\cite{angles2017foundations}. In LPGs, nodes model entities and can have multiple labels or types and edges model relationships between nodes and can be directed or undirected. Furthermore, both nodes and edges store information in the form of properties which are key-value pairs. 
Formally, an LPG is defined as a tuple $G = (V, E, L, l_n, l_e, K, W, p_n, p_e)$ where $V$ is the set of nodes, $E$ is the set of edges, $L$ is the set of labels, $K$ is the set of keys, and $W$ is the set of values. $l_n:V\rightarrow \mathcal{P}(L)$ and $l_e:E\rightarrow \mathcal{P}(L)$ are labeling functions for nodes and edges respectively. A property is modeled as a key-value pair $(k, v), k\in K, v\in W$, and $p_n: V\rightarrow \mathcal{P}(K\times W)$ and $p_e: E\rightarrow\mathcal{P}(K\times W)$ are mappings from nodes and edges respectively to one or more properties. Note that $\mathcal{P}(S)$ is the power set of a set $S$.
The LPG model offers a flexible and intuitive way to model real-world graph structures and is used in many popular graph databases such as Neo4j~\cite{neo4j_book} or JanusGraph~\cite{janusgraph}. It also provides powerful query capabilities through languages like Cypher~\cite{francis2018cypher}.
We show an example LPG in the left part of Figure~\ref{fig:example}.

\subsection{Higher-Order Graph Data Models}

\enlargeSQ

While traditional graph data models focus on pairwise relationships, where edges only connect two nodes, many real-world systems involve HO relationships, where interactions occur among multiple entities simultaneously in more complicated ways. Higher-order graph data models thus extend plain graphs by incorporating richer structures; see Figure~\ref{fig:scheme} (top) for examples.

First, \textbf{hypergraphs} are generalizations of graphs where hyperedges connect multiple nodes instead of just two, allowing representation of group interactions (e.g., co-authorship networks, friend groups). A hypergraph $H$ is defined as $H = (V, E)$ where $V$ is the set of nodes and $E\subseteq \mathcal{P}(V)$ is the set of hyperedges.
Second, \textbf{simplicial complexes} are a specific case of hypergraphs containing structured higher-order relationships (called \emph{simplices}, \emph{simplex} in singular), such as cliques or hyperedges, such that every subset of any simplex must also be included in the complex. They are used when modeling, e.g., social interactions.
Moreover, with \textbf{subgraph collections}, graphs are complemented with additional subgraph structures that can store more fine-grained information. These subgraphs can further model nodes and be connected by higher-order subgraph-edges (e.g., large molecular graphs such as proteins). A graph with a subgraph collection is formally defined as $G = (V, E, S, F)$ where $V$ is the set of nodes, $E\subseteq V\times V$ is the set of edges, $S\subseteq \{(V', E')\ |\ V'\subseteq V, E'\subseteq E \cap (V'\times V') \}$ is the set of subgraphs, and $F\subseteq S\times S$ is the set of subgraph-edges.
Then, \textbf{tuples} can enable capturing ordered multi-node interactions in a structured way. A graph with a node-tuple collection is formally defined as $G = (V, E, T)$, where $V$ is the set of nodes, $E\subseteq V\times V$ is the set of edges, and $T\subseteq \bigcup_{k=2}V^k$ is the set of node-tuples.
Finally, \textbf{motif} is used to refer to a recurring small graph pattern (such as a triangle or a specific subgraph)~\cite{besta2024demystifying}.

\if 0

\subsection{Graph Neural Networks}

GNNs have emerged as powerful tools for learning representations of graph-structured data. GNNs perform tasks like node classification, link prediction, and graph classification by effectively capturing relational dependencies in graphs using message-passing paradigms to propagate information throughout the graph, unlike traditional machine learning models, which struggle to take advantage of the graph structure.

While standard GNNs are designed for pairwise relational graphs, recent advances have extended them to handle HOGDMs. Such HOGNNs~\cite{besta2024demystifying} often outperform traditional GNNs on tasks requiring nuanced structural understanding. Moreover, different formulations of higher-order GNNs, such as hypergraph networks, simplicial complexes, subgraph-based models, or more complicated structures,  allow for tailored architectures that fit specific datasets better than traditional GNNs.

Integrating GNNs within the higher-order graph database framework introduces several key challenges. First, there is a need for efficient mechanisms to store, index, and query higher-order structures within the database itself, which is something existing systems are not equipped to handle. Second, we require the development of novel GNN architectures that effectively exploit the expressiveness of these rich, higher-order graph representations. Third, increased model complexity raises concerns around interpretability and explainability, which are already a pressing point of research in standard GNNs and become even more pronounced in higher-order settings. Finally, computational overhead may become a significant factor, as higher-order GNNs will almost always require additional memory and processing to handle larger, denser, and more structured representations. Our work, bridging the gap between database systems, deep learning, and higher-order graph models, provides coordinated advances in data modeling, algorithm design, and system integration.

\fi

\section{HIGHER ORDER: CASE STUDIES}

To further motivate integration of HO into GDBs, we now present several real-world examples of HO in graph datasets. While we cannot reveal full original data due to confidentiality issues, the following examples, as well as their graph models, are directly based on industry use cases encountered in operations. Readers uninterested in these use case details can proceed directly to Section~\ref{sec:marrying}.

\subsection{Quality Incidents as Subgraph Collections}
\label{sec:use-case-incidents}

We consider an industrial graph that models the complex operational dynamics of manufacturing plants; details are in Figure~\ref{fig:use-case}. The graph encodes interrelated entities such as production plants, batch execution schedules, plant configuration changes, and maintenance tasks. Each plant is represented as a subgraph containing its configuration states (denoted \(P\)) and the changes between those states (\(C\)). Maintenance tasks (\(M\)) are performed at specific configuration states, while production batches (\(B\)) are executed under those states.

In one observed scenario, batches \(B_1\) through \(B_7\) are executed successfully, but a quality incident occurs during batch \(B_8\). Causal analysis identifies a non-trivial correlation involving plant state \(P_3\), the preceding configuration change \(C_2\), maintenance tasks \(M_2\) and \(M_4\), and prior batches \(B_5\) and \(B_7\). This set of interacting elements forms a semantically meaningful subgraph that reflects the potential cause of the incident.

We formalize this event by lifting the correlated subgraph as a first-class HO entity. The subgraph is semantically categorized using an incident ontology through an \texttt{is\_instance\_of} relationship. Additionally, case-specific metadata -- such as severity, time of occurrence, or resolution measures -- is attached as attributes to this subgraph. Furthermore, the subgraph is connected to other structurally or semantically similar incidents occurring in different plants via similarity edges. This structure supports efficient retrieval, clustering, and analytics across multiple incident instances.

Traditional graph systems would struggle to model such a semantically rich, multi-entity incident as a coherent unit. HO-GDBs address this by treating subgraphs like the one described above as native, addressable entities. These higher-order representations enable expressive queries (e.g., \textit{``retrieve all subgraphs structurally similar to known quality incidents''}), metadata annotation, and integration with ontologies -- all while facilitating transactional and analytical operations. The seamless embedding of such structures into the HO-GDB model facilitates advanced use cases in manufacturing intelligence, incident management, and predictive maintenance.

\subsection{Causality of Plant Events as Node-Tuples}
\label{sec:use-case-causality}
    
Another example case involves a temporal state graph representing the evolving configuration of a production plant. The graph consists of devices, sensors, and control components as nodes, connected by edges indicating physical or logical dependencies. Each event occurring in the system transitions the graph from one state to another, forming a temporal sequence of modifications. The system evolution is thus captured as a sequence of graphs \( G_0, G_1, G_2, G_3, G_4 \), each reflecting a new state of the system after an event.

Initially, the graph \( G_0 = (V_0, E_0) \) contains three components \( V_0 = \{n_1, n_2, n_3\} \) and edges \( E_0 = \{e_1, e_2\} \). At event \( I_1 \), a new sensor node \( n_4 \) is installed and connected via edge \( e_3 \) to valve \( n_2 \), resulting in a new graph \( G_1 \). Following a time interval \( \Delta t_1 \), event \( I_2 \) replaces the pump node \( n_3 \) with a new pump \( n_5 \), updating the existing edge \( e_2 \) to point to the replacement. After another interval \( \Delta t_2 \), at event \( I_3 \), the operating speed \( p_1 \) of pump \( n_5 \) is adjusted to a new value \( v \). Finally, after time \( \Delta t_3 \), a quality issue is detected as event \( I_4 \).

Subsequent analysis reveals that the combination of adding the sensor (event \( I_1 \)) and adjusting the pump speed (event \( I_3 \)) -- with the specific timing \( \Delta t = \Delta t_1 + \Delta t_2 \) -- forms a causal pattern that likely contributed to the quality incident. This insight is captured in the form of a higher-order node-tuple construct \( C_1 = (I_1, I_3)\), modeling the causal relationship, with a property \(\Delta t\). Such tuples enable the expression of multi-event dependencies beyond what is representable using simple edges. They can also be associated with metadata (e.g., confidence, impact score) or linked to known incident categories via ontological typing.

Traditional graph databases lack native constructs to encapsulate multi-node, multi-event relationships with temporal semantics. The proposed HO-GDB paradigm addresses this gap by enabling node-tuples as first-class entities, allowing for structured representation of causal chains and temporal logic as higher-order patterns. These constructs can be stored, queried, and reasoned over using standard graph interfaces.

\section{MARRYING GRAPH DATABASES WITH HIGHER ORDER}
\label{sec:marrying}

We first illustrate how GDBs can be combined with HO to enable next-generation complex graph data analytics.

\begin{table}[H]
  \small
  \centering
  \begin{tabularx}{\linewidth}{lX}
    \toprule
    \textbf{Abbrev.} & \textbf{Meaning}\\
    \midrule
    HO      & Higher-Order\\
    GDB     & Graph Database \\
    GNN     & Graph Neural Network\\
    HO-GDB  & Higher-Order Graph Database \\
    HO GNN  & Higher-Order Graph Neural Network\\
    LPG     & Labeled Property Graph \\
    RDF     & Resource Description Framework \\
    ACID    & Atomicity, Consistency, Isolation, Durability\\
    OLTP    & Online Transactional Processing\\
    OLAP    & Online Analytical Processing\\
    CRUD    & Create, Read, Update, and Delete\\
    \bottomrule
  \end{tabularx}
  \caption{Glossary of abbreviations used in the paper.}
  \label{tab:abbr}
\end{table}

\subsection{How to Use Higher-Order with GDBs?}

It is not immediately clear how to natively support HO graph structures within existing GDB systems. HO constructs such as motifs, hyperedges, or subgraph collections do not naturally fit into standard vertex- and edge-based data models, and naively flattening them into first-order relationships often leads to loss of semantics, structural ambiguity, or inefficiency in querying~\cite{besta2024demystifying, morris2020weisfeiler, goh2022simplicial, giusti2023cell, feng2019hypergraph}. For instance, subgraphs or polyadic interactions cannot be easily indexed or updated using traditional neighbor-based abstractions. We therefore investigate how to represent, index, and query HO constructs while preserving compatibility with established GDB workloads and design guarantees such as ACID.

To this end, we extensively studied both the internal architecture of graph databases and the variety of HO structures used in analytics and learning, for a total of more than 300 papers. Our goal is to determine a general, expressive, and efficient method for embedding HO capabilities into GDBs, in a way that supports both OLTP and OLAP workloads, while remaining agnostic to specific HO models. Our study reveals that an approach that satisfies the above goals is to encode HO constructs as \textit{typed nodes and edges within the LPG model}, using schema extensions that transform HO entities (e.g., hyperedges) into first-class heterogeneous graph elements. This design allows each HO construct to be attributed with properties, linked to other entities, and queried through standard GDB interfaces. Moreover, this approach maintains compatibility with existing indexing, storage, and transactional semantics -- enabling seamless integration of HO graph data into production-ready GDBs.

\subsection{Graph Data Model + Higher-Order}
\label{sec:extending-gdb}

The first step towards HO-GDBs is understanding how to appropriately encode any HO structures using the underlying LPG data model used in graph databases.
%
%
Here, our key insight enabling this integration is that \textit{any HO graph can be transformed into a multi-partite heterogeneous graph}, where each partition represents a distinct HO entity (e.g., a specific hyperedge). 
%
A \textbf{heterogeneous graph} $G = (V, E, \tau, \kappa)$ comprises a set of nodes $V$, directed edges $E \subset V \times V$, a node-type function $\tau : V \rightarrow K_V$ and an edge-type function $\kappa: E \rightarrow K_E$. Heterogeneous graphs can be represented trivially in the LPG model since $\tau$ and $\kappa$ can be modeled as labels. We will show that any considered HO structure can be modeled using a heterogeneous graph \textit{without losing any information}.

Figure~\ref{fig:scheme} illustrates example HO structures (the top part) being encoded as multi-partite heterogeneous graphs (the middle part). For example, a hyperedge becomes a node labeled as a ``hyperedge'', a node-tuple becomes a node with typed edges to its elements that also store the tuple ordering in the edge properties, a subgraph is modeled as a node labeled as a ``subgraph'' connected to the nodes and edges it contains, and so on. This yields bipartite, tripartite, or even multipartite graphs, depending on the complexity of the original HO structure. These graphs are then integrated into an LPG graph (the bottom part). This enables querying these representations using standard LPG procedures. A concrete example of an LPG being augmented with HO is in Figure~\ref{fig:example}.

When transforming any HO structure into a heterogeneous graph (referred to as \textit{lowering}) and back (referred to as \textit{lifting}), we require that the transformation is \textit{lossless}, i.e., that it preserves the semantics of the original structure. This losslessness is achieved by ensuring that the transformation is \textit{isomorphism-preserving}, i.e., that any isomorphism present in the original graph is present in the result graph, and vice versa~\cite{besta2024demystifying}: two higher-order graphs $H_1$ and $H_2$ are called isomorphic, denoted $H_1\cong H_2$ if there exists a one-to-one mapping of entities from $H_1$ to $H_2$.

\begin{theorem}
    Let $L: \mathcal{G}\to\mathcal{H}$ and $L^\top:\mathcal{H}\to\mathcal{G}$ be isomorphism preserving lifting and lowering functions such that $L\circ L^\top = \text{id}_{\mathcal{H}}$ and $ L^\top\circ L = \text{id}_{\mathcal{G}}$, where $\mathcal{H}$ and $\mathcal{G}$ are respectively the space of higher-order and heterogeneous graphs. Then $L$ and $L^\top$ provide lossless transformations between higher-order and heterogeneous graphs.
    \label{thm:loss}
\end{theorem}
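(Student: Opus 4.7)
The plan is to unpack the definition of ``lossless'' adopted in the paper and show that, under the two standing hypotheses (isomorphism preservation of $L$ and $L^\top$, together with the mutual-inverse relations $L\circ L^\top = \mathrm{id}_{\mathcal H}$ and $L^\top\circ L = \mathrm{id}_{\mathcal G}$), the two transformations carry isomorphism classes bijectively in both directions. According to the text preceding the theorem, losslessness is the statement that for any pair of higher-order graphs $H_1,H_2\in\mathcal H$ we have $H_1\cong H_2 \iff L^\top(H_1)\cong L^\top(H_2)$, and symmetrically for $G_1,G_2\in\mathcal G$. So the task reduces to proving these two biconditionals.

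First I would handle the forward implications. If $H_1\cong H_2$, then isomorphism preservation of $L^\top$ gives $L^\top(H_1)\cong L^\top(H_2)$ directly; the same argument applied to $L$ settles the corresponding forward direction for heterogeneous graphs. Next I would handle the converses by chaining isomorphism preservation with invertibility. Suppose $L^\top(H_1)\cong L^\top(H_2)$. Applying the isomorphism-preserving map $L$ to both sides yields
\[
L(L^\top(H_1))\;\cong\;L(L^\top(H_2)),
\]
and substituting $L\circ L^\top=\mathrm{id}_{\mathcal H}$ collapses this to $H_1\cong H_2$. The symmetric argument, using $L^\top\circ L=\mathrm{id}_{\mathcal G}$, closes the remaining direction. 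This establishes a bijection between isomorphism classes of $\mathcal G$ and $\mathcal H$, which by the definition introduced just before the theorem is precisely losslessness.

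The step I expect to be least mechanical is articulating exactly what ``information'' losslessness is meant to protect, because the formal claim as stated is almost a tautology: once one assumes both isomorphism preservation and mutual invertibility, losslessness follows immediately. I would therefore spend a sentence emphasising why the two hypotheses are genuinely independent (isomorphism preservation without invertibility could collapse distinct classes; invertibility without isomorphism preservation could shuffle labels or orderings that are semantically meaningful, e.g.\ the ordering encoded on edges of a node-tuple representation). The genuine technical work, which is orthogonal to this theorem and is carried out implicitly by the constructions in Figure~\ref{fig:scheme}, is to verify that for each concrete higher-order primitive (hyperedges, simplices, subgraph collections, node-tuples) one can in fact \emph{exhibit} an $(L,L^\top)$ pair satisfying the hypotheses; I would flag this as the practical obstacle and note that each such construction boils down to checking that the typed-node/typed-edge encoding preserves both the incidence data and any ordering or nesting metadata, and that the inverse reads them back uniquely.
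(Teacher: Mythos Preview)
Your proposal is correct and follows essentially the same argument as the paper: use isomorphism preservation to get the forward implication, then apply the other map and invoke the mutual-inverse identities to obtain the converse, with the $\mathcal G$-side handled symmetrically. Your version is in fact slightly cleaner, since you explicitly separate the forward and converse directions and make precise how invertibility is used, whereas the paper compresses these into a single biconditional chain.
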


\iftr
\begin{proof}
    Because $L^\top$ and $L$ are mutual inverses, both maps are bijections. First, we argue about reflexivity: for every $H\in\mathcal{H}$ we have $(L\circ L^\top)(H)=H$, so lowering and then lifting returns the original higher-order graph. Now let $H_1,H_2\in\mathcal{H}$. Since $L^\top$ is isomorphism-preserving,
    \[
        H_1\cong H_2
        \;\Longleftrightarrow\;
        L^\top(H_1)\cong L^\top(H_2).
    \]
    Applying $L$ to both sides and using $L\circ L^\top=\mathrm{id}_{\mathcal{H}}$ yields
    \[
        H_1\cong H_2
        \;\Longleftrightarrow\;
        (L\circ L^\top)(H_1)\cong (L\circ L^\top)(H_2)
        \;\Longleftrightarrow\;
        H_1\cong H_2,
    \]
    showing that the transformation is lossless since isomorphism is preserved. The symmetric argument starting with $G_1,G_2\in\mathcal{G}$ shows that $L$ is lossless as well.
\end{proof}
\fi

%
\iftr
Our lifting and lowering transformations mathematically formalized for different HO graph models in Appendix~\ref{sec:appendix}.
\else
Our lifting and lowering transformations are mathematically formalized for different HO graph models, and -- together with all the proofs -- they are included in the appendix (due to submission limits, the appendix could not be included, but it is available in the final version).
\fi
%

\subsection{Transactions + Higher-Order}
\label{sec:oltp_concepts}

OLTP workloads in graph databases are required to satisfy the classical ACID properties: Atomicity, Consistency, Isolation, and Durability. These guarantees ensure correctness and reliability in transactional processing, particularly when dealing with complex graph operations such as insertions, deletions, and updates of nodes, edges, and now also HO constructs such as hyperedges and subgraphs. The introduction of HO into graph databases introduces new challenges for enforcing these guarantees due to the layered or decoupled representations between the logical HO entity and its physical realization in the heterogeneous graph form.
To this end, we analyze each ACID property below and illustrate how its enforcement is impacted by the presence of HO structures. In the following Section~\ref{sec:design}, we will discuss how we address these challenges with our HO-GDB design.

\subsubsection{Atomicity.\ }

Atomicity ensures that a transaction is treated as an indivisible unit of work: either all operations within the transaction are executed successfully, or none are, leaving the database unchanged in the event of failure. In HO-GDBs, a single user-level operation -- such as adding a hyperedge or modifying a subgraph -- often results in multiple low-level operations on the underlying heterogeneous representation. For instance, inserting a hyperedge may involve the creation of a new node representing the hyperedge, multiple participation edges linking it to member nodes, and potentially updates to type annotations. To preserve atomicity, all of these operations must be bundled into a single atomic transaction. If any of these steps fail, the database engine must revert all intermediate changes, ensuring that no partial HO entity remains in the system.
This is particularly challenging in distributed settings or when supporting asynchronous execution. Systems such as Neo4j support transaction scopes that can be extended to wrap these HO operations through custom transactional APIs or triggers.
%

\subsubsection{Consistency.\ }

Consistency ensures that each transaction transitions the database from one valid state to another, upholding all schema constraints and invariants.
In traditional GDBs, one distinguishes \textbf{referential consistency}~\cite{atlanConsistency2022, mongodbConsistency} (when an edge or a vertex is added or deleted, all references to that edge -- from indexes, properties, or related nodes -- must be correctly updated or removed to prevent dangling references) and \textbf{structural consistency} (modifying a node must maintain appropriate updates to its incident edges or properties). 

In HO-GDBs, other forms of consistency are also needed. We first identify \textbf{HO invariance consistency}, which ensures that semantical invariants of HO structures are maintained. For example, a hyperedge must connect at least two nodes, or a motif must satisfy a given prescribed structural pattern. Updates that violate these rules must be rejected or corrected before commit. Second, \textbf{HO cross-representation consistency} ensures that a lifted HO construct (e.g., a hyperedge node with participation edges) must remain logically equivalent to its abstract HO definition; \textit{this is emergent from the losslessness of the lifting and lowering operations}. For example, renaming a subgraph node label must reflect across all its nested references.
These forms of consistency may require extended integrity constraints or declarative schema specifications akin to constraint languages over graph schemas~\cite{angles2017foundations}. 

\subsubsection{Isolation.\ }

Isolation ensures that concurrent transactions do not interfere with one another and that their effects are equivalent to some serial execution. Isolation is hard to achieve with HO due to potential interleaved updates on multiple related elements. For example, two concurrent transactions may attempt to add overlapping motifs or update shared subgraphs. If isolation is not enforced, partial updates from one transaction might corrupt the semantic validity of the other's HO construct.
Isolation must therefore be enforced not only at the level of individual nodes and edges, but across semantically grouped elements representing a single HO construct. This may require lock management or conflict resolution protocols extended to HO entities. 

\subsubsection{Durability.\ }

Durability ensures that once a transaction commits, its effects persist even in the face of failures.
In HO-GDBs, durability extends to complex transactional groups that realize a single HO structure. This includes the persistence of all elements involved in the lifted representation: HO nodes, edges linking them to participants, metadata, and possible index updates.
A key challenge arises in ensuring that partially persisted HO constructs (e.g., a hyperedge with missing participants due to a crash) do not remain in the system.
Furthermore, because HO constructs can be nested or hierarchical (e.g., a subgraph containing motifs that include hyperedges), durability must be recursively enforced across these levels. Failure to do so may lead to structural fragmentation or data corruption.

\if 0
\subsubsection{Standard OLAP}
Traditional OLAP queries in graph systems focus on analytical tasks such as traversal, pattern matching, reachability, shortest paths, centrality, and clustering. When working with lowered structures, we encounter several performance and algorithmic implications:

\begin{enumerate}
    \item Lowering transformations introduce additional auxiliary nodes and edges, increasing the graph size. This inflates the input size to standard graph algorithms, leading to increased memory consumption and runtime.

    \item Queries involving joins, aggregations, or pattern matching may require optimization to account for expanded multipartite structures. For instance, queries that would traditionally involve an edge traversal may now require several hops through auxiliary nodes in the graph.

    \item Several algorithms simply do not extend to higher-order in trivial ways. Their logic must be adapted significantly to reason over encoded higher-order semantics. For example, PageRank or community detection over hypergraphs may require non-trivial changes to the algorithm to ensure reasonable logical correctness.
\end{enumerate}

Thus, while standard OLAP engines can operate over the lowered heterogeneous representations, performance and correctness may degrade without the smart implementation of analytical queries with structural awareness of higher-order semantics.

\subsubsection{Higher-Order OLAP}
Higher-order can make several OLAP tasks more efficient. Algorithms for higher-order graph traversal, pattern matching, path detection, and connectivity can now operate directly on lifted constructs. Some analytical processes like clique detection or subgraph pattern matching benefit directly from closely related higher-order structures such as simplicial complexes or subgraph collections. These structures can largely reduce the complexity of query complexity and improve expressivity.

Overall, higher-order OLAP opens up new analytical capabilities while demanding new and improved algorithms and indexing strategies for older ones. Thus, Query planners and optimization engines aiming for performance must evolve to reason accordingly about lowered structures.
\fi

\section{SYSTEM DESIGN \& IMPLEMENTATION}
\label{sec:design}

We now describe the architecture and implementation of our prototype of an HO-GDB system for HO graph modeling and processing within a traditional graph database. It is designed as a lightweight, modular, parallelizable extension to existing graph databases, with Neo4j as our chosen backend. The system supports lossless lowering and lifting between various higher-order structures such as hypergraphs, node-tuples as well as subgraphs, and their heterogeneous representations. It ensures transactional integrity, supports expressive querying, and provides a scalable API for selected OLTP and OLAP workloads involving higher-order graph models.


\subsection{Design Overview}
\label{sec:design-overview}

The system design is organized into three decoupled layers, enabling modular and scalable development: the Higher-Order API layer, the Higher-Order Processing layer, and the native GDB backend.


First, the \textbf{Higher-Order API layer} implements an interface to perform OLTP and OLAP procedures on HO graphs. 
This layer is the entry point for applications and scripts interacting with HO-GDB. It encodes the logic for manipulating higher-order entities and their transformations into LPG compatible structures. This layer also provides a semantically meaningful API for a high-level set of core methods that can perform CRUD (create, read, update, delete) operations on different higher-order structures, by combining several subtransactional methods within a single transaction, ensuring both atomicity across complicated reads and writes as well as correctness. The query API also comes with a set of lightweight classes, from Label and Property to graph element classes such as Node, Edge, HyperEdge, Subgraph, and NodeTuple to model LPG-compliant and other HO entities, which encapsulate structural metadata and improving expressivity.


The \textbf{Higher-Order Processing layer} manages the execution of LPG-based queries on the graph database, bridging the HO API and the underlying graph database, by scheduling low-level, subtransactional graph operations within HO queries over a backend graph database. It exposes a clean LPG-compatible API for adding, deleting, updating, and matching nodes and edges, along with importing/exporting structures from files. This layer also provides functionality to create and close database sessions and transactions, run transactional and non-transactional queries as well as index creation and deletion. 


The \textbf{backend graph database} is the underlying database engine on which the transformed data structures will be stored and queried. Thus, this layer handles the actual LPG graph storage and query execution.

\if 0
\begin{figure}
    \centering
    \includegraphics[width=1.0\columnwidth]{figures_pdf/system_design.pdf}
    \caption{Overview of the HO-GDB layers.}
    \label{fig:system-design}
\end{figure}


\fi

\subsection{ACID Guarantees}
\label{sec:acid}

We now discuss how the implementation and design maintains the ACID guarantees.

\subsubsection{Atomicity.}

All higher-order operations (such as inserting, deleting, or updating) are internally decomposed into multiple low-level subtransactional database operations, which create typed auxiliary nodes or edges within a transaction. Atomicity is enforced by bundling these low-level queries into a single transaction, making sure that either all of them succeed or none. If any sub-operation fails due to a system failure or constraint violation, partial updates are rolled back. We harness fine-grained concurrency control implemented within the underlying backend—e.g., via multi-version concurrency control (MVCC) or subgraph-level locking—to help reduce contention while preserving correctness~\cite{besta2023demystifying}.

\subsubsection{Consistency.}

HO-GDB enforces several levels of consistency. First, each higher-order data object is validated at creation, enforcing \textbf{invariant consistency}. For this, HO-GDB's auxiliary metadata and HO graph model classes come with assertions executed at query initialization.
Second, \textbf{cross-representation consistency} is maintained by the atomicity guarantees of the API layer and the isomorphism preservation provided by the lifting and lowering algorithms (Section~\ref{sec:extending-gdb}).
Moreover, both \textbf{referential} and \textbf{structural consistency} are maintained through HO-GDB's delete policy. When a higher-order entity (such as a subgraph or tuple) is deleted, all incident relationships (e.g., node incidence, edge incidence) are automatically removed, to prevent dangling links and ensure consistency. Here, we ensure that deleting constituent elements (such as nodes within a tuple or subgraph) does not result in the automatic deletion of the HO entity itself (i.e., as this HO entity is separate from its constitutent elements). For example, removing all nodes from a node-tuple will leave behind an empty tuple node with its labels and properties preserved. In the underlying heterogeneous structures, we leverage Neo4j's referential and structural consistency, which ensures that there are no dangling references or hanging edges in the lowered representation stored on Neo4j; the losslessness of the transformations guarantees that these representations will always lift into valid higher-order graphs.


\subsubsection{Isolation \& Durability.}

Both isolation and durability benefit from the guarantees of the underlying backend, in this case Neo4j. Isolation is ensured through a fine-grained locking mechanism to ensure data consistency during concurrent write operations.
Durability guarantees ensure that once an HO transaction is committed, the changes are written to disk and are resilient to crashes or system failures.

\subsection{Indexing}

HO-GDB's Higher-Order API layer also provides an interface for managing indexes over HO structures based on (label, property) pairs. This enables users to define and manage indexes over any HO entities analogously to how this is done for standard nodes and edges, by type and property regardless of how these are implemented in the underlying data model.
Since HO-structures are lowered to LPG graphs, this implementation ensures that indexing is consistent across any graph structures.

\subsection{Seamless Cloud Deployment}

To support remote or containerized deployments and to further enhance modularity, HO-GDB also implements an optional proxy driver that abstracts database connectivity through a REST-based proxy service modeled after the Neo4j Python client driver. This layer decouples the client process from the graph database driver by routing Cypher queries, session management, and transaction management through a lightweight HTTP interface. It enables seamless deployment in cloud, serverless, or multi-instance environments without the user having to modify core database logic since it ensures compatibility with the rest of the system through a plug-and-play session interface identical to the native driver.


\subsection{Seamless Integration with Graph Stores \& Applications}

When integrating HO-GDB with other graph stores and applications, there is no need for any applications changes, because of our decision of bridging the higher-order with GDBs at the level of the LPG data model, which enables seamless and portable integration. Further, no changes to transactional APIs, indexing, or query optimizers are needed. Our backend-agnostic HO-API Layer can be ported to any LPG store by implementing a thin interface that maps basic CRUD operations to the store’s query language (cf.~Section~\ref{sec:design-overview}).

\section{HIGHER ORDER IN GDB WORKLOADS}

We now discuss developing HO-GDB workloads.

\subsection{OLTP Workloads}
\label{subsec:OLTP}

HO-GDB provides an OLTP interface for processing standard and HO graph structures. This interface is meant to extend existing LPG query interfaces to higher-order structures, while maintaining semantic correctness. It supports all four CRUD operations over all considered HO entities, and can be straightforwardly extended to any other HO concepts. Each such entity is implemented as an abstraction with a unique schema comprising sub-entities, labels, and properties (an example is shown in Listing~\ref{lst:sg}). The HO query interface abstracts away the underlying lifting and lowering and enables declarative OLTP procedures on these entities. 

\begin{lstfloat}[t]
\begin{lstlisting}[language=Python, caption=\textmd{Definition of the Subgraph HO data model and selected routines in the interface provided by an HO-GDB. The proposed API integrates seamlessly into nearly all existing graph databases because it expresses HO structures as simplified LPG graphs.}, label=lst:sg]
class Subgraph:
  subgraph_nodes: List[Node]
  subgraph_edges: List[Edge]
  labels: List[Label]
  properties: List[Property]

gdb.add_subgraph(
  Subgraph(
    subgraph_nodes = [n1, n2, n3],
    subgraph_edges = [e12, e23, e31],
    labels = [Label("Triangle")],
    properties = [Property("name", str, "triangle1")]
  )
)

gdb.delete_subgraph(
  Subgraph(
    subgraph_nodes = [], subgraph_edges = [],
      [Label("Triangle")], [Property("name", str, "triangle1")]
  ) # Parameters optional, but adding info improves query efficiency and correctness
) 
\end{lstlisting}
\end{lstfloat}

As in Cypher~\cite{francis2018cypher}, deletion and update operations do not require full structural specification. To support this, HO-GDB performs internal pattern matching based on the provided labels and properties. This internal matching mechanism identifies the minimal subgraph(s) in the database that satisfy the given label and property constraints, without requiring the user to specify the complete structure to be deleted. This allows for concise queries that can target specific nodes or relationships while still maintaining semantic correctness and safety of the operations.


\subsection{OLAP Workloads}

The versatile HO CRUD API enables easy development of OLAP workloads.
For example, HO-GDB enables semantically rich path traversal queries over HO structures by abstracting paths as declarative Path objects composed of labeled graph elements and property constraints. These paths are internally transformed, using the same lowering algorithms, into graph pattern matching queries that operate over the lowered heterogeneous representation, allowing users to implement HO traversal algorithms such as Hypergraph walks, an important class of HO analytics~\cite{hayashi2020hypergraph, lu2013high, chun2024random, eriksson2021choosing, luo2024sampling, mulas2022random, lee2011hyper, niu2019rwhmda, chitra2019random, carletti2020random}, see Listing~\ref{lst:path}.
Another example important class of OLAP workloads enabled by the proposed HO-GDB paradigm are higher-order GNNs (HO GNNs).

GNNs have emerged as powerful tools for learning representations of graph data~\cite{wu2020comprehensive, zhou2020graph, zhang2020deep,
chami2020machine, hamilton2017representation, bronstein2017geometric, kipf2016semi, xu2018powerful, wu2019simplifying, besta2023parallel, bazinska2023cached, besta2023high}. GNNs perform tasks like node classification, link prediction, and graph classification by effectively capturing relational dependencies in graphs using message-passing paradigms to propagate information throughout the graph, unlike traditional machine learning models, which struggle to take advantage of the graph structure.

While standard GNNs are designed for pairwise relational graphs, recent advances have extended them to handle HO. Such HO GNNs~\cite{besta2024demystifying} often outperform traditional GNNs on tasks requiring nuanced structural understanding. Moreover, different formulations of higher-order GNNs, such as hypergraph networks, simplicial complexes, subgraph-based models, or more complicated structures, allow for tailored architectures that fit specific datasets better than traditional GNNs~\cite{morris2020weisfeiler, bunch2020simplicial, yang2022simplicial, goh2022simplicial, giusti2023cell, heydari2022message, feng2019hypergraph, bai2021hypergraph, hajij2020cell, wang2020gognn, bouritsas2022improving, rossi2018hone, qian2022ordered, xu2018powerful, morris2019weisfeiler, hajij2022topological, bodnar2021cellular, bodnar2021topological, arya2020hypersage, yadati2020neural}.
In HO-GDB, we implement an HO-GNN based on~\cite{fey2020hierarchicalintermessagepassinglearning} by jointly operating on molecular graphs and their junction tree decompositions~\cite{jin2019junctiontreevariationalautoencoder}. This HO model improves expressiveness by capturing structural patterns and enabling message passing between both standard vertices and HO subgraphs grouping these vertices.


\begin{lstfloat}[t]
\begin{lstlisting}[language=Python, caption={\textmd{Find sorted list of names of ABC university research groups that have collaborated with Alice. In the graph model, belonging to a research group is represented as a subgraph node-membership.}}, label=lst:path]
path = Path()
path.add(
  n1 = Node([Label("Person")],
    [Property("name", str, "Alice")])
)
path.add(
  s1 = Subgraph([Label("ResearchGroup")],
      [Property("univ", str, "ABC")])
)
gdb.traverse_path(
  [path],return_values=["s1.name"],sort=["s1.name"]
)
\end{lstlisting}
\end{lstfloat}

\begin{lstfloat}[t]
\begin{lstlisting}[language=Python, caption={\textmd{Constructing a torch_geometric Data object by retrieving node features, edge indices, and edge attributes using path traversal queries on an HO-GDB.}}, label=lst:torch]
# Get Node features
path_x = Path()
path_x.add(x = Node([Label("Node")]))
x = torch.tensor(gdb.traverse_path(
  [path_x],return_values=["x.feat"],sort=["x.id"]
)) # works for any HO model

# Get Edge Index and Attributes
path_e = Path()
path_e.add(s = Node([Label("Node")]))
path_e.add(e = Edge(label=Label("Edge")))
path_e.add(t = Node([Label("Node")]))
edge_index = torch.tensor(gdb.traverse_path(
  [path_e],return_values=["s.id","t.id"],
    sort=["s.id","t.id"]
))
edge_attr = torch.tensor(gdb.traverse_path(
  [path_e],return_values=["e.feat"],
    sort=["s.id","t.id"]
))
data = torch_geometric.Data(x, edge_index, edge_attr)
\end{lstlisting}
\end{lstfloat}



\section{THEORETICAL ANALYSIS}

\begin{table*}[ht]
    \centering
    \setlength{\tabcolsep}{3pt}
    \small
    \begin{tabular}{lccccclcc}
        \toprule
        & \multicolumn{1}{c}{\textbf{Lowering \& Lifting}} & \multicolumn{2}{c}{\textbf{Node}} & \multicolumn{2}{c}{\textbf{Edge}} & \multicolumn{3}{c}{\textbf{HO Structure}} \\
        \cmidrule(lr){3-4} \cmidrule(lr){5-6} \cmidrule(lr){7-9}
        \textbf{GDB Model} & \textbf{Time \& Storage} & \textbf{Insert} & \textbf{Delete} & \textbf{Insert} & \textbf{Delete} & \textbf{Structure} & \textbf{Insert} & \textbf{Delete} \\
        \midrule
        Plain Graph & -  & $O(d)$ & $O(d+\delta)$ & $O(d)$ & $O(d)$ & - & - & - \\
        Hypergraph & $O(d(nn_H))$ & $O(d)$ & $O(d+\delta)$ & - & - & Hyperedge & $O(d+n)$ & $O(d+n)$ \\
        Simplicial Complexes & $O(d(nn_H))$ & $O(d)$ & $O(d+\delta)$ & - & - & Hyperedge & $O(d+n)$ & $O(d+n)$ \\
        Node-Tuple Collections & $O(d(m + nn_T))$ & $O(d)$ & $O(d+\delta)$ & $O(d)$ & $O(d)$  & Tuple & $O(d+n)$ & $O(d+n)$ \\
        Subgraph Collections & $O(d(nn_S+ n_F+mn_S))$ & $O(d)$ & $O(d+\delta)$ & $O(d)$ & $O(d)$ & \makecell[l]{Subgraph \\ Subgraph-Edge} & \makecell{$O(d+n+m)$ \\ $O(d)$} & \makecell{$O(d+n+m)$ \\ $O(d)$} \\
        \bottomrule
    \end{tabular}
    \caption{\textmd{Time complexity of lifting and lowering procedures and OLTP operations for different HO graphs with $n$ nodes, $m$ edges, $n_H$ hyperedges, $n_T$ node-tuples, $n_S$ subgraphs, and $n_F$ subgraph-edges. Here, every entity and relationship has $O(d)$ properties, nodes and subgraphs have maximum degree of $\delta$, and hyperedges/node-tuples/subgraphs contain $O(n)$ nodes.}}
    \label{tab:oltp_complexity}
\end{table*}

We present a theoretical analysis of operations in HO-GDB when harnessing different HO data models. We consider insertions and deletions for basic graph primitives (nodes and edges). The results are in Table~\ref{tab:oltp_complexity}.
\iftr
Full derivations are in Appendix~\ref{sec:appendix}.
\else
Full derivations are in the appendix, which due to submission limits could not be included, but is available in the final version.
\fi


%
These results for the \textbf{time complexity of OLTP operations} reflect the cost of maintaining heterogeneously encoded HO constructs, where each HO operation is decomposed into primitive LPG operations such as node/edge additions and index maintenance. Deletion costs are typically higher due to the need to remove incident references.
%
%
We also analyze the \textbf{time complexity of the lowering and lifting} operations as well the \textbf{storage complexity} between HO graphs and their LPG representations.
These transformations are shown to be \emph{lossless and isomorphism-preserving}, as proven in Section~\ref{sec:extending-gdb}. 
%
%
Our theoretical analysis confirms that HO-GDB's higher-order transformations support bounded time complexity for OLTP operations. The lifting and lowering procedures introduce overhead that scales linearly with the size and count of the HO constructs, which remains practical for real-world deployments. These complexity bounds underpin HO-GDB’s ability to offer competitive performance while enabling expressive higher-order graph workloads.


\if 0
For instance, a hyperedge with $h$ member nodes requires:

\begin{itemize}
    \item One LPG node for the hyperedge: $\mathcal{O}(d)$.
    \item $h$ auxiliary edges for participation: $\mathcal{O}(hd)$.
\end{itemize}

Hence, the total storage cost is $\mathcal{O}(dh)$, which matches the insertion time.

For a subgraph with $n$ nodes and $e$ edges, storage includes:

\begin{itemize}
    \item Structural metadata: $\mathcal{O}(nd + ed + d)$.
    \item Additional $\mathcal{O}(\delta)$ cost upon deletion to clean up incident references.
\end{itemize}
\fi

\section{EVALUATION}

We now evaluate the performance of the proposed HO-GDB. The main goal is to show that (1) harnessing HO does enhance accuracy of workloads such as graph learning, while simultaneously (2) does not incur significant overheads in the latency and throughput of graph queries.

\subsection{Experimental Setup \& Methodology}

\ifnblnd
All experiments are conducted on the Helios hybrid cluster at PLGrid, running SUSE Linux Enterprise Server 15 SP5 on an \texttt{aarch64} architecture. Each node is equipped with 4×72-core NVIDIA Grace CPUs (288 cores total) clocked up to 3.5 GHz, with 480GB RAM and large multi-level caches (\textsc{L3}: 456 MiB). We utilize an NVIDIA GH200 GPU with 120GB memory and CUDA 12.7 for training the higher-order GNN, under driver version 565.57.01. The host kernel is 5.14.21 and system management is provided by CrayOS. We also deploy a Neo4j 5.26.5 Community Edition instance. All workloads are run using Python v3.11.5, with Neo4j accessed via the official Neo4j Python driver v5.28.1.
\else
All experiments are conducted on a cluster running SUSE Linux Enterprise Server 15 SP5 on an \texttt{aarch64} architecture. Each node is equipped with 4×72-core NVIDIA Grace CPUs (288 cores total) clocked up to 3.5 GHz, with 480GB RAM and large multi-level caches (\textsc{L3}: 456 MiB). We utilize an NVIDIA GH200 GPU with 120GB memory and CUDA 12.7 for training the higher-order GNN, under driver version 565.57.01. The host kernel is 5.14.21 and system management is provided by CrayOS. We also deploy a Neo4j 5.26.5 Community Edition instance. All workloads are run using Python v3.11.5, with Neo4j accessed via the official Neo4j Python driver v5.28.1.
\fi
To minimize overhead for the Neo4j-backend, we use Bolt to communicate with Neo4j and expose a client-side Python-API. However, HO-GDB can be used with other communication protocols.

We experiment with running Neo4j on (1) the same node and on (2) different nodes than the node where the queries are made.
\iftr
We only plot the results for scenario~(1) to be concise.
\else
We only plot the results for scenario~(1) due to space limits.
\fi
The advantages of HO-GDB are even more prominent for scenario~(2) than in scenario~(1) because HO-GDB batches HO requests, minimizing inter-node latencies (we conservatively plot the results for scenario~(1)).

\subsection{Challenges in Selecting Backend GDBs}

Many graph databases exist, but most are not freely accessible. We attempted to obtain systems like Oracle's PGX, but were unsuccessful. Among open systems, we selected those supporting both OLTP and OLAP. After thorough evaluation and setup, we successfully configured Neo4j using its in-memory mode.

Overall, we focus on single-node query processing with potential replication for higher query throughput because the available actively maintained GDBs are either single-node designs only, or offer distributed-memory implementations hidden behind steep paywalls (costs easily scale to several 10,000 USD per evaluation).

\if 0
Evaluating HO-GDB against existing systems presents several challenges due to the lack of appropriate baselines, datasets, and benchmarking standards in the domain of HO graph databases. First, large-scale, high-performance graph database systems that support distributed or multi-node execution are typically proprietary and not freely available for research purposes. Systems like Neo4j or TigerGraph offer powerful infrastructure but limit flexibility in modifying or extending internal functionality, making them unsuitable for direct comparisons or producing large scale HO workloads.

Second, HO-GDB provides a unified, high-level API for querying and transforming higher-order graph structures. To our knowledge, there are no other open-source graph databases that expose comparable functionality through a native higher-order API, making direct head-to-head comparisons on API expressiveness and performance infeasible.

Third, there is a lack of publicly available or standardized datasets that contain large-scale higher-order graph structures, especially for subgraph, tuple, or motif collections. Most real-world graph datasets, such as social networks, authorship networks, or web graphs, are limited to pairwise or at best hyperedge relationships due to the lack of appropriate processing systems. As a result, a study into constructing realistic, synthetic HO datasets becomes necessary for a thorough evaluation, introducing additional challenges in generating realistic and meaningful workloads. Additionally, unlike the LDBC Social Network Benchmark (SNB)~\cite{ldbc_snb_specification} or Graph 500~\cite{graph500} for standard graphs, no widely accepted benchmarking frameworks currently exist for higher-order graph processing. This further complicates the process of designing representative OLTP and OLAP workloads and limits comparability across systems.

These challenges collectively highlight the need for future work in designing higher-order benchmarking suites, standardizing data models, and developing community datasets to support systematic and reproducible evaluation of HO-GDB systems.
\fi

\subsection{Scalability of HO OLTP Workloads}

\begin{figure*}
    \begin{subfigure}[t]{0.33\textwidth}
        \centering
        \includegraphics[width=\textwidth]{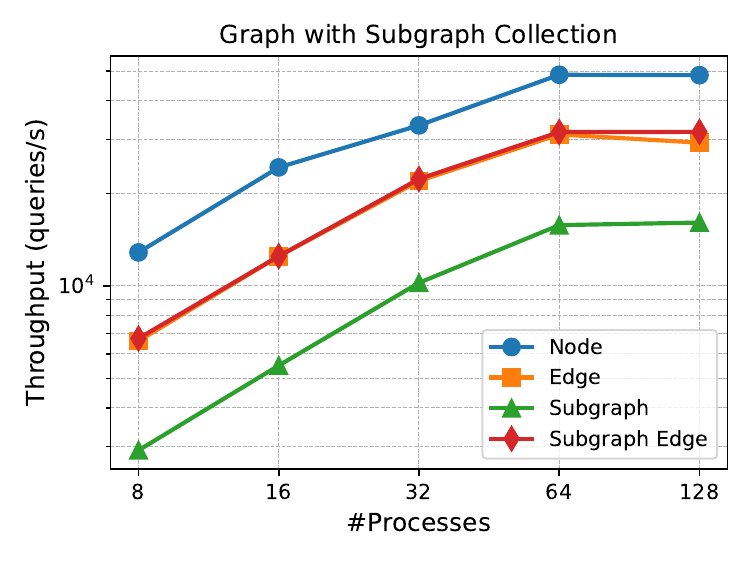}
        \caption{\textmd{Strong Scaling (subgraphs, reads)}}
        \label{fig:plot_subg_strong}
    \end{subfigure}
    ~
    \begin{subfigure}[t]{0.33\textwidth}
        \centering
        \includegraphics[width=\textwidth]{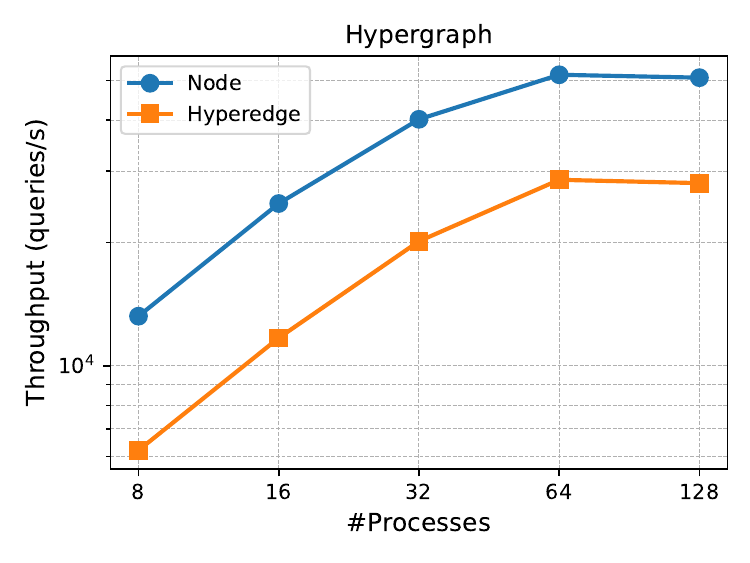}
        \caption{\textmd{Strong Scaling (hypergraph, reads)}}
        \label{fig:plot_hg_strong}
    \end{subfigure}
    ~
    \begin{subfigure}[t]{0.33\textwidth}
        \centering
        \includegraphics[width=\textwidth]{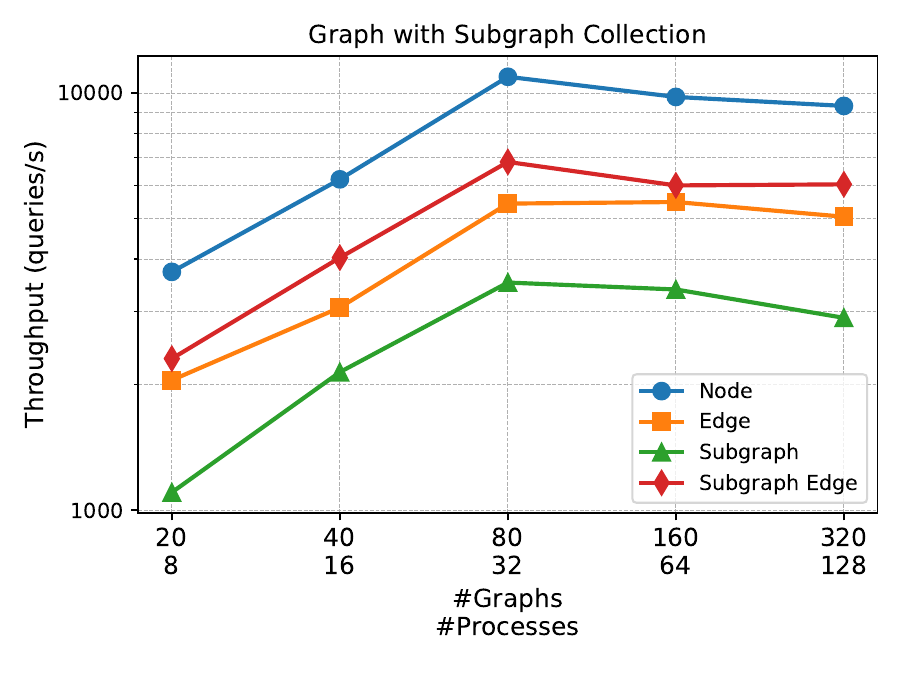}
        \caption{\textmd{Weak Scaling (subgraphs, reads)}}
        \label{fig:plot_weak}
    \end{subfigure}
    
    \begin{subfigure}[t]{0.25\textwidth}
        \centering
        \includegraphics[width=\textwidth]{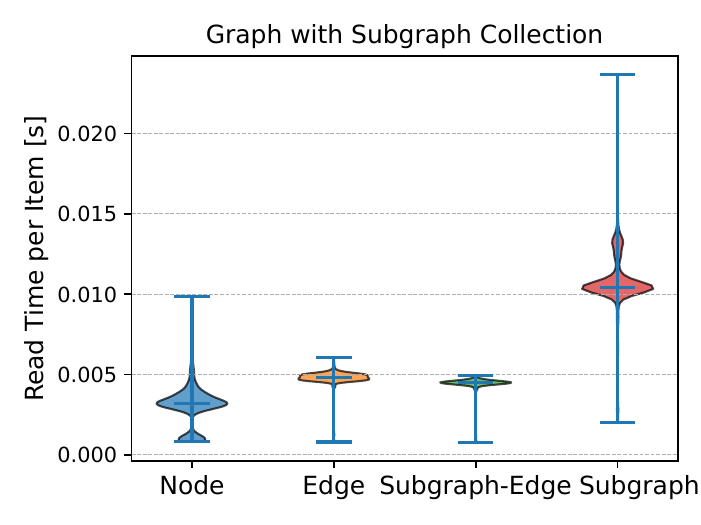}
        \caption{\textmd{Read latency (subgraphs)}}
        \label{fig:plot_sg_violin}
    \end{subfigure}
    ~
    \begin{subfigure}[t]{0.25\textwidth}
        \centering
        \includegraphics[width=\textwidth]{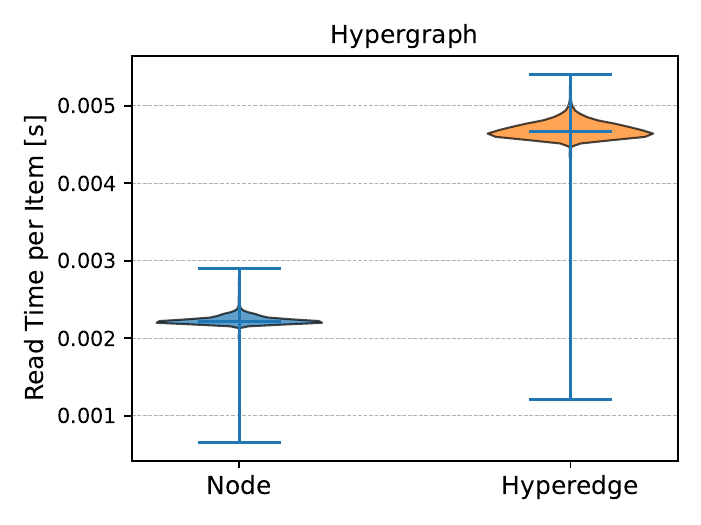}
        \caption{\textmd{Read latency (hypergraphs)}}
        \label{fig:plot_hg_violin}
    \end{subfigure}
    ~
    \begin{subfigure}[t]{0.25\textwidth}
        \centering
        \includegraphics[width=\textwidth]{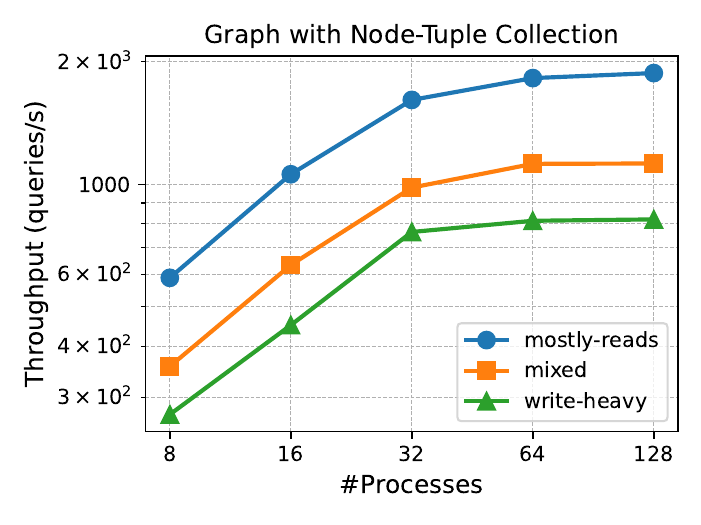}
        \caption{\textmd{Strong scaling (node-tuples, mixed)}}
        \label{fig:plot_tuples}
    \end{subfigure}
    ~
    \begin{subfigure}[t]{0.25\textwidth}
        \centering
        \includegraphics[width=\textwidth]{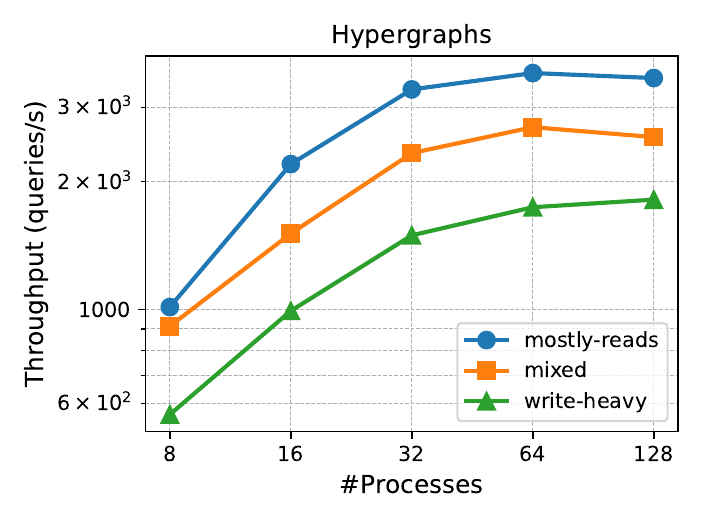}
        \caption{\textmd{Strong scaling (hypergraphs, mixed)}}
        \label{fig:plot_hg_oltp}
    \end{subfigure}
\vspace{-1em}
    \caption{\textmd{Higher-Order OLTP results.}}
\end{figure*}


We first analyze OLTP workloads by stressing our system with a high-velocity stream of HO graph queries and transactions, while scaling both parallel processes for the fixed graph size (\textbf{strong scaling}) as well as scaling the graph dataset size with the number of processes (\textbf{weak scaling}).
We evaluate the parallel performance of HO-GDB by measuring the total read time and throughput for 4,096 entities (nodes, edges, subgraphs, and subgraph-edges) using up to 128 concurrent processes (Figure~\ref{fig:plot_subg_strong}), from a database containing the first 400 molecules of the ZINC dataset~\cite{irwin2020zinc20} with a junction tree transformation~\cite{jin2019junctiontreevariationalautoencoder}. The HO graph has 9,137 nodes, 19,622 edges, 5,692 subgraphs, 10,584 subgraph-edges; the lowered graph contains 45,035 nodes and 159,936 edges. 
%
%
We observe that all entity types exhibit strong scaling, with total read time decreasing linearly as the number of processes increase. This confirms HO-GDB's ability to efficiently parallelize HO queries. Figure~\ref{fig:plot_weak} shows weak scaling of read queries in graphs with subgraph collections, where the number of processes and dataset size increase; the patterns are similar.
Next, Figure~\ref{fig:plot_hg_strong} shows strong scaling of read queries on the MAG-10 dataset~\cite{amburg2020categorical, Sinha-2015-MAG}, a subset of the Microsoft Academic Graph with labeled hyperedges. The HO graph contains 80,198 nodes and 51,889 hyperedges of 10 different types, and the lowered representation of this graph contains 132,087 nodes and 180,725 edges.
We observe a drop in scaling between 64 and 128 processes, especially pronounced for subgraph, hyperedge, and subgraph-edge reads, and attribute this to the saturation of the CPU on the Neo4j instance, indicating that HO queries, especially on complex structures like subgraphs, are more computationally intensive and sensitive to resource limits. 

For further insights, we also study per-item read time distributions across HO entity types, see Figures~\ref{fig:plot_sg_violin} and~\ref{fig:plot_hg_violin}. As expected, reads are fastest for smaller entities that contain a smaller representation in lowered heterogeneous graphs, such as nodes, and slowest for more complex entities such as subgraphs. The consistent skew and long tails, especially for nodes, suggest that early query runs may have incurred higher latency due to cold caches.




\begin{figure*}[t]
    \centering
    \includegraphics[width=\textwidth]{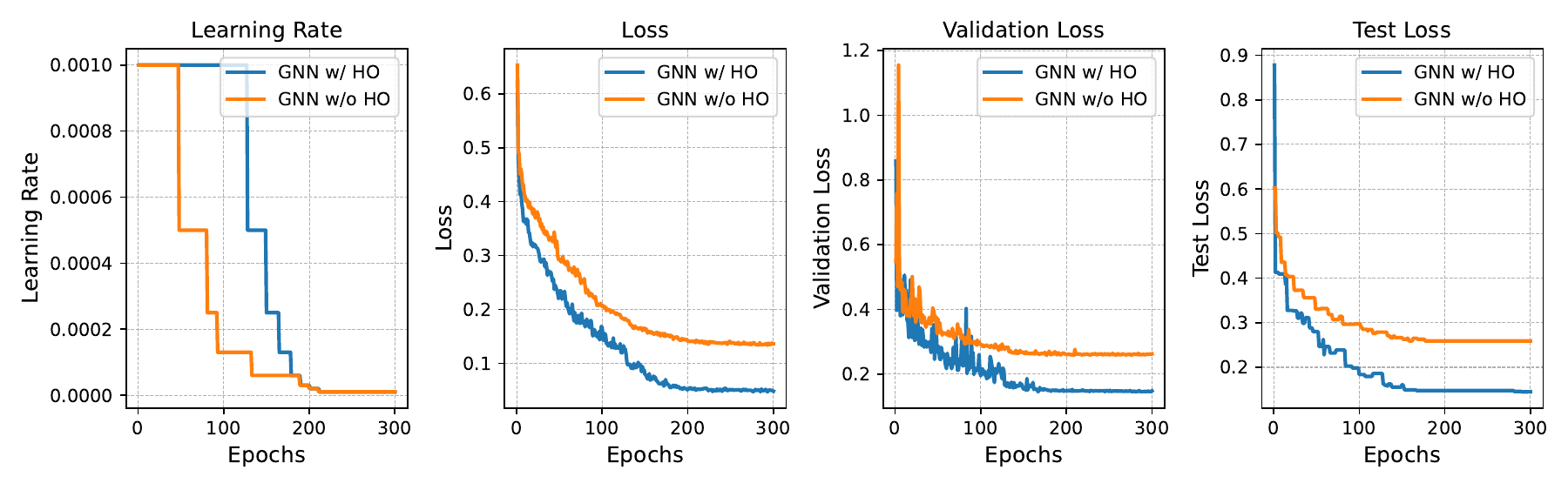}
\vspace{-2.5em}
    \caption{\textmd{Higher-Order OLAP results (HO GNNs).}}
    \label{fig:plot_gnn}
\end{figure*}

\begin{table}[t]
    \centering
    \small
    \begin{tabular}{lccc}
        \toprule
        \textbf{Query Type} & \textbf{mostly-reads} & \textbf{mixed} & \textbf{write-heavy} \\
        \midrule
        \textbf{Reads} & \textbf{95\%} & \textbf{50\%} & \textbf{25\%} \\
        Retrieve node & 31.67\% & 16.67\% & 8.33\% \\
        Retrieve edge & 31.67\% & 16.67\% & 8.33\% \\
        Retrieve node-tuple & 31.67\% & 16.66\% & 8.33\% \\
        \midrule
        \textbf{Writes} & \textbf{5\%} & \textbf{50\%} & \textbf{75\%} \\
        Add edge & 4.16\% & 41.67\% & 62.5\% \\
        Add node-tuple & 0.84\% & 8.33\% & 12.5\% \\
        \bottomrule
    \end{tabular}
    \caption{\textmd{Breakdown of OLTP query workloads used in the node-tuple strong scaling experiment.}}
    \label{tab:oltp-tuple}
        \vspaceSQ{-1.5em}
\end{table}

\begin{table}[t]
    \centering
    \small
    \begin{tabular}{lccc}
        \toprule
        \textbf{Query Type} & \textbf{mostly-reads} & \textbf{mixed} & \textbf{write-heavy} \\
        \midrule
        \textbf{Reads} & \textbf{93.75\%} & \textbf{50\%} & \textbf{25\%} \\
        Retrieve node & 46.87\% & 25\% & 12.5\% \\
        Retrieve hyperedge & 46.88\% & 25\% & 12.5\% \\
        \midrule
        \textbf{Writes} & \textbf{6.25\%} & \textbf{50\%} & \textbf{75\%} \\
        Update node & 3.13\% & 42.19\% & 57.81\% \\
        Add hyperedge & 3.12\% & 7.81\% & 17.19\% \\
        \bottomrule
    \end{tabular}
    \caption{\textmd{Breakdown of OLTP query workloads used in the hypergraph strong scaling experiment.}}
    \label{tab:oltp-hg}
        \vspaceSQ{-1.5em}
\end{table}


We also analyze performance across a range of OLTP query mixes to evaluate how graphs with node tuples and hypergraphs scale under varying read-to-write workloads and concurrent processes. Figure~\ref{fig:plot_tuples} shows strong scaling throughput for three representative query profiles of 12,288 queries (mostly-reads, mixed, and write-heavy, as defined in Table~\ref{tab:oltp-tuple}) executed over a randomly generated graph with 20k nodes, 997.5k edges and 4,000 node-tuples, with a tail-heavy edge and tuple distribution (note that the used LPG graphs have rich label \& property sets attached to nodes as well as edges, so their counts of vertices and edges are necessarily smaller than with standard graphs without such rich data).
The size of node tuples varied between 2 and 984, with an average of 69. The lowered graph comprises 1.02M nodes and 2.27M edges.
Similarly, we run three representative query profiles of 8096 queries (Table~\ref{tab:oltp-hg}) on a random hypergraph with 60k nodes and 57k hyperedges; the lowered graph comprises 117k nodes and 1.07M edges. Results are shown in Figure~\ref{fig:plot_hg_oltp}. We observe similar scaling patterns with higher throughput.

As expected, read-dominant workloads offer significantly higher throughput and benefit significantly from increased parallelism. Meanwhile, the mixed and write-heavy profiles, while scaling pretty well until 64 processes, exhibit diminishing returns beyond 64, reflecting the higher overheads introduced by write queries in the underlying database (in large-scale graph and irregular workloads, parallel performance often plateaus or degrades beyond $\approx$32-64 processes/node due to issues such as hardware saturation~\cite{green2019performance, mirsoleimani2015scaling, schweizer2015evaluating}). Nevertheless, these results demonstrate that HO-GDB can efficiently scale OLTP workloads over higher-order graph data.

Note that the cost of write operations is not inherently bound to the HO-GDB design, but largely depends on user choices for consistency/isolation and the underlying GDB implementation. One can increase the throughput of HO writes by, e.g., lowering isolation guarantees. This, however, could result in phantom reads and related effects known from relational databases. Currently we conservatively maintain full consistency/isolation with serializability.

\subsection{Enhanced Accuracy of HO OLAP Workloads}

The key potential of harnessing HO is to improve the accuracy of OLAP workloads. For this, we investigate graph-level prediction tasks with Higher-Order Graph Neural Networks (HO GNNs), implementing the Inter-Message Passing-enabled GNN architecture described in~\cite{fey2020hierarchicalintermessagepassinglearning}. This is a dual-message-passing architecture with two GNNs (molecular graph and its junction tree). These networks exchange features between layers, allowing cross-level interaction. The base GNN uses standard Graph Isomorphism Operator layer (GINEConv)~\cite{hu2020strategiespretraininggraphneural}.

Our pipeline begins by populating the graph database with 400 molecules from the ZINC dataset’s test suite; these graphs are then queried and used in training the model. We compare the performance of the resulting HO GNN against a traditional non-HO GINEConv architecture with three layers.

As shown in Figure~\ref{fig:plot_gnn}, the HO-enhanced GNN exhibits consistently faster convergence and significantly improved loss across training, validation, and test phases. The test loss is reduced by 43.99\%, clearly demonstrating that higher-order representations enable the model to capture complex structural dependencies that are invisible to standard edge-based GNNs. Specifically, HO GNN leverages the additional parameters at the junction tree level, capturing complex subgraph structures and higher-order dependencies. Non-HO fails to extract these multi-scale features, leading to inferior performance. The learning rate schedule also reveals that the HO-enhanced model tolerates higher initial learning rates for longer periods, indicating a more stable and informative gradient landscape during training.


\subsection{Preprocessing Costs}

We also consider costs of preprocessing. Here, lifting and lowering transformations are one-time preprocessing overheads; they take $<$5\% of the runtime of conducted OLTP experiments and $<$1\% of the OLAP jobs (modern GDBs are optimized for batch insertions).

\if 0
The key potential of harnessing HO is to improve the accuracy of OLAP workloads.
For this, we investigate graph-level prediction tasks, with HO graph neural networks (HOGNNs), implementing the Inter-Message Passing enabled GNN architecture described in~\cite{fey2020hierarchicalintermessagepassinglearning}. Our pipeline begins by populating the graph database with 400 molecules from the ZINC dataset's test suite; these graphs are then queried and used in training the model. We compare the performance of the resulting HOGNN against a traditional non-HO GINEConv~\cite{hu2020strategiespretraininggraphneural} architecture with 3 layers.
As shown in Figure~\ref{fig:plot_gnn}, the HO-enhanced GNN shows significantly faster convergence, and also shows a 43.99\% improvement in the accuracy on the test data, demonstrating that HO models capture critical graph-level patterns better than plain graph models, aided by the HO-derived bias. Despite the doubling of runtime due to the added structural complexity, the resulting accuracy gains and convergence speed highlight that higher-order reasoning can lead to more precise and scalable graph learning pipelines.
\fi

\section{RELATED WORK}

Our work touches upon many different areas in system design, data analytics, and machine learning.

\textbf{Graph databases} have been researched in both academia and
industry~\cite{angles2018introduction, davoudian2018survey, han2011survey,
gajendran2012survey, gdb_survey_paper_Kaliyar, kumar2015domain,
gdb_survey_paper_Angles}, in terms of query languages~\cite{angles2018g,
bonifati2018querying, angles2017foundations}, database
management~\cite{gdb_management_huge_unstr_data, pokorny2015graph,
junghanns2017management, bonifati2018querying, miller2013graph}, execution in
novel environments such as the serverless setting~\cite{toader2019graphless,
mao2022ermer}, novel paradigms~\cite{besta2022neural}, and others~\cite{deutsch2020aggregation}.  Many
graph databases exist~\cite{tigergraph2022ldbc, cge_paper, tiger_graph_links,
janusgraph, chen2019grasper, azure_cosmosdb_links, amazon_neptune_links,
virtuoso_links, arangodb_links, tesoriero2013getting, profium_sense_links,
triplebit_links, gbase_paper, graphbase_links, graphflow, livegraph,
memgraph_links, dubey2016weaver, sparksee_paper, graphdb_links,
redisgraph_links, dgraph_links, allegro_graph_links, apache_jena_tbd_links,
mormotta_links, brightstardb_links, gstore, anzo_graph_links, datastax_links,
blaze_graph_links, oracle_spatial, stardog_links, besta2023graph,
cayley_links}.
More recently, new systems have emerged, focusing on scalability and integration with analytics and streaming. Examples include TuGraph~\cite{tugraph}, Ultipa Graph~\cite{ultipa}, and NebulaGraph~\cite{nebulagraph}, which support massive parallelism and in some cases hybrid architectures combining transactional and analytical graph workloads. TuGraph, for instance, supports both ACID-compliant transactions and graph analytics via a shared in-memory core. Ultipa and Memgraph~\cite{memgraph_links} extend traditional graph storage engines with GPU-accelerated modules from platforms like NVIDIA cuGraph~\cite{cugraph}.
All these systems support heterogeneous graphs, as such graphs can be trivially represented using labels.
Hence, our approach enables directly extending all these systems with HO.

\textbf{Resource Description Framework (RDF)}~\cite{lassila1998resource} is a
standard to encode knowledge in ontological
models and in RDF stores using triples~\cite{modoni2014survey, harris20094store,
papailiou2012h2rdf}.
There exists a lot of works on RDF and knowledge
graphs~\cite{ristoski2016rdf2vec, portisch2020rdf2vec, jurisch2018rdf2vec,
ristoski2019rdf2vec, lin2015learning, arora2020survey, wang2017knowledge,
dai2020survey, urbani2016kognac, shi2017proje, trouillon2017knowledge,
akrami2020realistic, yao2019kg, sun2019re, chen2020knowledge, lin2015learning,
shi2018open, besta2024hardware}.
While we do not focus on RDFs, both RDF and knowledge graphs also seamlessly enable heterogeneous graphs. Thus, these frameworks could adapt higher-order structures as well.

In addition to GDBs, there are also many \textbf{systems for dynamic, temporal, and streaming graph processing}~\cite{besta2019practice, sakr2021future, choudhury2017nous}.
Their design overlaps with graph databases, because they also focus on
high-performance graph queries and updates, and on solving global analytics.
\if 0
Examples include Betweenness Centrality~\cite{pontecorvi2015faster, ediger2012stinger,
tsalouchidou2020temporal, solomonik2017scaling, madduri2009faster}, Graph
Traversals~\cite{ediger2012stinger, macko2015llama, sengupta2017evograph,
sengupta2016graphin, besta2015accelerating, besta2017slimsell,
kepner2016mathematical, bulucc2011combinatorial}, Connected
Components~\cite{macko2015llama, sengupta2016graphin, ediger2012stinger,
feng2020risgraph, feng2020risgraph, gianinazzi2018communication}, Graph
Coloring~\cite{duan2019dynamic, besta2020highcolor},
Matchings~\cite{bernstein2021deamortization, neiman2015simple,
besta2020substream}, and many others~\cite{thorup2000near, lee2016efficient,
henzinger1999randomized, demetrescu2009dynamic, eppstein1999dynamic,
gianinazzi2021parallel, ivkovic1993fully, besta2019communication,
besta2017push}.
\fi
Recent efforts work towards processing
subgraphs~\cite{qin2019mining} such as maximal cliques or dense
clusters~\cite{cook2006mining, jiang2013survey, horvath2004cyclic,
chakrabarti2006graph, besta2021sisa, gms, besta2022probgraph}.
Moreover, hybrid frameworks such as AWS GraphStorm~\cite{graphstorm} and Google Spanner Graph~\cite{spannergraph} offer specialized support for GNN training, streaming workloads, and eventually consistent storage.
%
%
These system can even more easily be adapted to use HO as they do not require ACID guarantees.

\textbf{Higher-order graph representations} have gained attention in representation learning, where constructs such as motifs, hyperedges, and subgraphs are leveraged to inject structural bias into GNNs~\cite{morris2020weisfeiler, bunch2020simplicial, yang2022simplicial, goh2022simplicial, giusti2023cell, heydari2022message, feng2019hypergraph, bai2021hypergraph, hajij2020cell, wang2020gognn, bouritsas2022improving, rossi2018hone, qian2022ordered, xu2018powerful, morris2019weisfeiler, hajij2022topological, bodnar2021cellular, bodnar2021topological, arya2020hypersage, yadati2020neural, besta2023hot}. Embedding HO structures (e.g., hyperedges or motifs~\cite{besta2022motif}) into latent vector spaces could further enhance their use in GNNs and retrieval-based graph search~\cite{chen2025interpretable}.
While some systems allow users to extract and process motifs as feature templates or aggregation units, current graph databases lack support for embedding, indexing, or querying HO structures.
Our proposed HO-GDB architecture bridges the gap between higher-order graph structures and learning, and OLTP as well as OLAP workloads in GDBs.

\textbf{Graph pattern matching}~\cite{cook2006mining, jiang2013survey} and in general \textbf{graph mining \& learning}~\cite{besta2021sisa, gms} focus, among others, on efficient searching for various graph patterns, many of which are higher-order structures. Numerous algorithms~\cite{cook2006mining, jiang2013survey, horvath2004cyclic,
chakrabarti2006graph, duan2019dynamic, besta2015accelerating, besta2020highcolor, gianinazzi2021parallel, gianinazzi2021learning, strausz2022asynchronous, besta2017push, bernstein2021deamortization, neiman2015simple, solomonik2017scaling, gianinazzi2018communication, qin2019mining, 
besta2020substream} and frameworks~\cite{cook2006mining, jiang2013survey, horvath2004cyclic, besta2019slim,
chakrabarti2006graph, besta2021sisa, gms, besta2022probgraph} have been developed. These schemes are related to this work in that one could harness the algorithmic solutions to provide more efficient implementations of various OLAP analytics related to mining HO structures within a HO-GDB.

Many \textbf{workload specifications and benchmarks for GDBs} exist,
covering OLTP interactive queries (SNB~\cite{ldbc_snb_specification},
LinkBench~\cite{armstrong2013linkbench}, and BG~\cite{barahmand2013bg}), OLAP
workloads (Graphalytics~\cite{ldbc_graphanalytics_paper}), or business
intelligence queries (BI~\cite{DBLP:journals/pvldb/SzarnyasWSSBWZB22,
early_ldbc_paper}).
%
%
Moreover, there are many evaluations of GDBs~\cite{capotua2015graphalytics,
dominguez2010survey, mccoll2014performance, jouili2013empirical,
ciglan2012benchmarking, lissandrini2017evaluation, lissandrini2018beyond,
tian2019synergistic}.
%
%
These workload specifications could be extended with HO workloads such as HO GNNs~\cite{besta2024demystifying}.

\section{CONCLUSION}

\enlargeSQ

Higher-order (HO) graph workloads such as subgraph-level GNNs, hypergraph traversals, and motif-based analytics are rapidly becoming central to achieving state-of-the-art performance in many graph-based machine learning and data science tasks. Yet, despite the widespread adoption and maturity of graph database systems (GDBs), these systems remain deeply rooted in first-order data models and lack native support for HO  constructs. This disconnect has left a growing set of accuracy-critical applications unsupported.

In this work, we introduce a new class of systems: Higher-Order Graph Databases (HO-GDBs), which extend the capabilities of traditional graph databases to support HO structures as first-class citizens. We present a formal framework for encoding HO constructs—such as hyperedges, node-tuples, and subgraphs—into the Labeled Property Graph (LPG) model using lossless lowering and lifting transformations. We then tackle the key system design challenges posed by this integration: how to ensure ACID-compliant transactions over HO entities, how to define declarative APIs for both OLTP and OLAP workloads, and how to maintain performance through motif-level parallelism. Our prototype, implemented atop Neo4j~\cite{neo4j_book}, demonstrates that HO integration enables substantial gains in the expressiveness and accuracy of OLAP tasks, while preserving transactional semantics and scalability of OLTP queries. In particular, our evaluation shows that higher-order GNNs trained over HO-GDB representations yield significantly better accuracy and faster convergence compared to their non-HO counterparts. HO-GBD could be further extended to other types of databases such as document stores~\cite{besta2023demystifying}, or it could harness various architectural acceleration techniques~\cite{besta2024hardware, di2022building, di2019network, fompi-paper}. As the lowerings of HO graphs have increased numbers of vertices and edges, they could be combined with various forms of structural graph compression~\cite{besta2018log, besta2018survey, besta2019slim} to accelerate their processing through smaller sizes.

Overall, this work introduces a paradigm shift in GDBs -- one that tightly couples HO learning and reasoning with transactional graph storage and querying. As demand for richer modeling and more accurate learning grows across industry and science, HO-GDBs may lead a new generation of graph systems capable of supporting advanced analytics. By extending HO constructs at the level of the core data model -- the LPG model -- our approach remains compatible with widely used architectures. This foundational integration enables seamless adoption across transactional GDBs, RDF systems, knowledge graphs, and eventually consistent graph engines, unifying HO semantics across storage, querying, and learning.

\iftr
\section*{Acknowledgements}

We thank Hussein Harake, Colin McMurtrie, Mark Klein, Angelo Mangili, and the whole CSCS team granting access to the Ault, Daint and Alps machines, and for their excellent technical support.
We thank Timo Schneider for help with infrastructure at SPCL, and Florian Scheidl for help at the early stages of the project.
This project received funding from the European Research Council (Project PSAP, No.~101002047), and the European High-Performance Computing Joint Undertaking (JU) under grant agreement No.~955513 (MAELSTROM). This project was supported by the ETH Future Computing Laboratory (EFCL), financed by a donation from Huawei Technologies. This project received funding from the European Union's HE research and innovation programme under the grant agreement No. 101070141 (Project GLACIATION).
We gratefully acknowledge Polish high-performance computing infrastructure PLGrid (HPC Center: ACK Cyfronet AGH) for providing computer facilities and support within computational grant no.~PLG/2024/017103.
\fi

\bibliographystyle{ACM-Reference-Format}
\bibliography{references.complete}

\iftr
\appendix
\section{Extending the Graph Data Model with Higher-Order}
\label{sec:appendix}

\subsection{Notation}

We summarize the mathematical notation, that is used throughout the paper in Table~\ref{tab:notation}.

\begin{table}[ht!]
    \small
    \centering
    \begin{tabular}{@{}ll@{}}
        \toprule
        \textbf{Symbol} & \textbf{Description} \\
        \midrule
        $H=(V_H,E_H,\ldots)$        & Higher-order graph under discussion. \\
        $n = |V_H|$                 & Number of (plain) vertices. \\
        $m = |E_H|$                 & Number of (plain) edges. \\
        $d$                         & Length of the feature vector stored on every \\
                                    & entity (nodes, edges, HO objects). \\
        $n_h = |h|$                 & Cardinality of a hyperedge $h\in E_H$. \\
        $n_H$                       & Number of hyperedges.\\
        $T$                         & Set of node-tuples in a graph with a node-tuple \\
                                    & collection. \\
        $n_T = |T|$                 & Number of node-tuples. \\
        $t = (v_0,\dots,v_{n_t-1})$ & single node-tuple; $|t|=n_t$ is its arity. \\
        $t_i$                       & $i$-th element of the tuple $t$. \\
        $S$                         & Set of subgraphs in a graph with a subgraph \\
                                    & collection. \\
        $F$                         & Set of subgraph-edges. \\
        $n_S = |S|$                 & Number of subgraphs. \\
        $n_F = |F|$                 & Number of subgraph-edges. \\
        $s=(V_s,E_s)\in S$          & Single subgraph with its own vertex/edge set. \\
        $n_s = |V_s|$               & Number of vertices inside subgraph $s$. \\
        $m_s = |E_s|$               & Number of edges inside subgraph $s$. \\
        \bottomrule
    \end{tabular}
    \caption{Mathematical notation used throughout the paper.}
    \label{tab:notation}
\end{table}

\subsection{Hypergraphs}

\sloppy Given an undirected hypergraph $H=(V_H,E_H,x_H)$ we construct a bipartite heterogeneous graph $G=(V_G,E_G,\tau,\kappa,x_G)$ via the lowering $L^\top_H$:
\begin{align*}
    V_G &= V_H \cup E_H,\\
    E_G &= \{(v,e),(e,v)\mid v\in V_H,\;e\in E_H,\;v\in e\},\\
    \tau_u &=
    \begin{cases}
        \texttt{\_node}      & u\in V_H,\\
        \texttt{\_hyperedge} & u\in E_H,
    \end{cases}\\
    \kappa_{(u,v)} &= \texttt{\_incidence}\qquad\forall(u,v)\in E_G,\\
    x_G(u) &= x_H(u)\qquad\forall u\in V_G.
\end{align*}

We recover $H$ from $G$ via lifting $L_H$:
\begin{align*}
    V_H &= \{v\in V_G\mid\tau_v=\texttt{\_node}\},\\
    E_H &= \bigl\{\{u\mid(u,h)\in E_G\}\;\big|\;
          h\in V_G,\,\tau_h=\texttt{\_hyperedge}\bigr\},\\
    x_H(u) &= x_G(u)\qquad\forall u\in V_H\cup E_H.
\end{align*}

\begin{theorem}
    \label{thm:hgproof}
    $L^\top_H$ and $L_H$ are isomorphism-preserving.
\end{theorem}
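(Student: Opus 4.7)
The plan is to establish isomorphism preservation in both directions by exhibiting a canonical induced bijection between the corresponding heterogeneous graphs (and vice versa), and then combining the result with Theorem~\ref{thm:loss} to conclude losslessness. First I would verify that the two maps are mutual inverses: composing $L_H\circ L^\top_H$ on a hypergraph $H$ recovers $H$ because the \texttt{\_node}-typed vertices of $L^\top_H(H)$ are exactly $V_H$, and the \texttt{\_hyperedge}-typed vertices reproduce each $e\in E_H$ via their \texttt{\_incidence} neighborhoods; symmetrically, $L^\top_H\circ L_H$ recovers any bipartite heterogeneous graph in the image of $L^\top_H$. Once bijectivity is settled, isomorphism preservation reduces to checking that the induced maps I describe below respect all relevant structure.

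For the forward direction (lowering), given a hypergraph isomorphism $\phi\colon V_{H_1}\to V_{H_2}$ satisfying $e\in E_{H_1}\iff\phi(e)\in E_{H_2}$, I would define $\Phi\colon V_{G_1}\to V_{G_2}$ by $\Phi|_{V_{H_1}}=\phi$ and $\Phi(e)=\{\phi(v):v\in e\}$ for $e\in E_{H_1}$. The verification has three parts: (i) type preservation, which holds because $\Phi$ sends \texttt{\_node}-typed vertices to \texttt{\_node}-typed vertices and \texttt{\_hyperedge}-typed vertices to \texttt{\_hyperedge}-typed vertices by construction; (ii) edge preservation, which follows from $v\in e\iff\phi(v)\in\phi(e)$ together with the uniform \texttt{\_incidence} edge type; and (iii) feature preservation, which holds because $\phi$ is an isomorphism of attributed hypergraphs and $x_G$ is defined to agree with $x_H$ on both vertex classes.

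For the reverse direction (lifting), given an isomorphism $\Psi\colon V_{G_1}\to V_{G_2}$ of the bipartite heterogeneous graphs, type preservation forces $\Psi$ to restrict to bijections $\phi\colon V_{H_1}\to V_{H_2}$ on the \texttt{\_node}-typed part and $\Psi_E$ on the \texttt{\_hyperedge}-typed part. Because every edge in $E_{G_i}$ is of type \texttt{\_incidence} and connects one vertex of each type, $\Psi$ preserves the incidence bipartition; hence for every $e\in E_{H_1}$ the image $\Psi_E(e)$ is the unique hyperedge vertex in $G_2$ whose neighborhood equals $\phi(e)=\{\phi(v):v\in e\}$. This is exactly the condition that $\phi$ is an isomorphism of hypergraphs, and feature equivariance descends from $x_{G_i}=x_{H_i}$.

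The main obstacle I anticipate is making the type-preservation argument fully rigorous in the heterogeneous-graph setting: I must ensure the working definition of heterogeneous-graph isomorphism includes preservation of $\tau$ and $\kappa$, so that $\Psi$ cannot swap \texttt{\_node} and \texttt{\_hyperedge} vertices. A subtler issue is handling hyperedges that share the same vertex set: the lifting formula uses set-builder notation, which would collapse such duplicates. This can be resolved either by assuming the standard convention that hyperedges are identified by their vertex sets, or by indexing \texttt{\_hyperedge} vertices to preserve multiplicity; I would briefly note this modeling choice and proceed under the set-based convention used in the paper.
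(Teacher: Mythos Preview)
Your proposal is correct and follows essentially the same approach as the paper's proof: build the induced bijection on the heterogeneous side from a hypergraph isomorphism (and conversely restrict a heterogeneous-graph isomorphism along the type partition), then verify preservation of types, incidence edges, and features. The paper phrases the hypergraph isomorphism as a pair $(\varphi_V,\varphi_E)$ rather than a single $\phi$ with an induced action on hyperedges, but your discussion of the set-based convention for hyperedges correctly anticipates and reconciles this minor presentational difference.
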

\begin{proof}
    Consider two hypergraphs $H_1$ and $H_2$ such that $H_1\cong H_2$. Then there exist bijections $\varphi_V$ and $\varphi_E$ such that for any hyperedge $e = \{v_1, \dots, v_k\} \in E_{H_1}$, $\varphi_E(e) = \{ \varphi_V(v_1), \dots, \varphi_V(v_k) \}$ (incidence is preserved), and the features are preserved: $x_{H_1}(u) = x_{H_2}(\varphi(u))$ for all $u \in V_{H_1} \cup E_{H_1}$. Let $G_1 = L^T_H(H_1)$ and $G_2 = L^T_H(H_2)$. Consider $\psi: V_{G_1} \rightarrow V_{G_2}$ defined as:
    \[
        \psi(v) = 
        \begin{cases}
            \varphi_V(v), & v \in V_{H_1}, \\
            \varphi_E(v), & v \in E_{H_1}.
        \end{cases}
    \]
    This function preserves edges (since $(v, e) \in E_{G_1} \iff (\psi(v), \psi(e)) \in E_{G_2}$), node types (since $\tau_1(v) = \tau_2(\psi(v))$), and features (since $x_{G_1}(v) = x_{G_2}(\psi(v))$). Thus $G_1\cong G_2 \impliedby H_1 \cong H_2$.
    
    Conversely, let $\psi:V_{G_1}\to V_{G_2}$ be a graph isomorphism. Since $\psi$ preserves types $\tau$, $\psi$ must yield independent bijections $\varphi_V=\psi|_{V_{H_1}}$ and $\varphi_E=\psi|_{E_{H_1}}$. Since $\psi$ preserves incidence, $e=\{v_i\}\in E_{H_1}\iff\varphi_E(e)=\{\varphi_V(v_i)\}\in E_{H_2}$; thus $(\varphi_V,\varphi_E)$ is a hypergraph isomorphism, implying $G_1\cong G_2 \implies H_1 \cong H_2$. Thus, $G_1\cong G_2 \iff H_1 \cong H_2$.

    The proof for $L_H$ follows through symmetry since $L_H$ and $L_H^\top$ are mutual inverses.
\end{proof}

The transformation is thus lossless, following theorem~\ref{thm:loss}; the transformation is depicted visually in the top row of Figure~\ref{fig:scheme}. If each hyperedge has size $O(n)$, the lowered heterogeneous graph $G$ has $n+n_H$ vertices and $\Sigma_{h} n_h \in O(nn_H)$ edges.  the memory complexities of the heterogeneous graph is $O(d(nn_H))$, while the lowering and lifting run in $O(d(nn_H))$ time. Graph updates can be performed by these operations:

\begin{enumerate}
    \item \emph{Inserting node $v_0$.} \\
    $V_G \gets V_G\cup \{v_0\}$
    \item \emph{Deleting node $v_0$.} \\
    $H_{v_0}:=\{h\mid (v_0,h)\in E_G,\;\tau_h=\texttt{\_hyperedge},\;|h|=2\}$\\
    $E_G\gets E_G\setminus\bigl\{(v_0,h),(h,v_0)\mid h\in V_G\bigr\}$\\
    $V_G\gets V_G\setminus\bigl(\{v_0\}\cup H_{v_0}\bigr)$
    \item \emph{Inserting hyperedge $e_0$.} \\
    $V_G \gets V_G\cup \{e_0\}$ \\
    $E_G \gets E_G \cup \{(v, e_0), (e_0, v) \mid  v\in e_0\}$
    \item \emph{Deleting hyperedge $e_0$.} \\
    $E_G \gets E_G \setminus \{(v, e_0), (e_0, v) \mid (e_0, v)\in E_G\}$\\
    $V_G \gets V_G\setminus \{e_0\}$ 
\end{enumerate}

\subsection{Simplicial Complexes}
Simplicial complexes are lowered exactly as hypergraphs. Because every subset of a simplex is itself a simplex, we may
alternatively replace each maximal simplex by the complete set of pairwise edges. This clique-lowering is lossless for simplicial complexes but not for arbitrary hypergraphs.

\subsection{Graphs with Node-Tuple Collections}

\sloppy Given an undirected graph with a node-tuple collection $H=(V_H, E_H, T, x_H)$, we can construct a tripartite heterogeneous multigraph $G=(V_G, E_G, \tau, \kappa, x_G)$ using the lowering $L_N^\top$:
\begin{align*}
    V_G &= V_H \cup E_H \cup T,\\
    E_G &= \{(v,e),(e,v)\mid e=(v,\cdot)\text{ or }e=(\cdot,v)\}\\
      &\cup\{(v,t)\mid t\in T,\;t_i=v\},\\
    \tau_u &=
    \begin{cases}
        \texttt{\_node}       & u\in V_H,\\
        \texttt{\_edge}       & u\in E_H,\\
        \texttt{\_node\_tuple}& u\in T,
    \end{cases}\\
    \kappa_{(u,v)} &=
    \begin{cases}
        \texttt{\_adjacency} & \tau_u=\texttt{\_node},\;\tau_v=\texttt{\_edge},\\
        \texttt{\_adjacency} & \tau_u=\texttt{\_edge},\;\tau_v=\texttt{\_node},\\
        (\texttt{\_node\_membership},i) & \tau_v=\texttt{\_node\_tuple},\;t_i=u,\\
    \end{cases}\\
    x_G(u) &= x_H(u)\quad \forall\ u\in V_G.
\end{align*}

\sloppy We recover $H$ from $G$ via lifting $L_N$:
\begin{align*}
    V_H &= \{v\in V_G\mid\tau_v=\texttt{\_node}\},\\
    E_H &= \bigl\{(u,v)\mid u,v\in V_H,\;
          \exists e\!:\!\tau_e=\texttt{\_edge},\,
          (u,e),(e,v)\!\in E_G\bigr\},\\
    T   &= \bigl\{(u_0,\dots,u_{k-1})\mid
          \exists t\!:\!\tau_t=\texttt{\_node\_tuple},\\
        &\hspace{5em}(u_i,t)\in E_G,\;
          \kappa_{(u_i,t)}=(\texttt{\_node\_membership},i)\bigr\},\\
    x_H(u) &= x_G(u)\qquad\forall u\in V_H\cup E_H\cup T_H.
\end{align*}

\begin{theorem}
    $L^\top_N$ and $L_N$ are isomorphism-preserving.
\end{theorem}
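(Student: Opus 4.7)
The plan is to follow the template of the hypergraph proof (Theorem~\ref{thm:hgproof}), extended to handle three entity types (standard nodes, standard edges, and node-tuples) and, most importantly, the ordering information carried by the indexed relation label $(\texttt{\_node\_membership}, i)$. First I would assume $H_1 \cong H_2$, which supplies three bijections $\varphi_V : V_{H_1} \to V_{H_2}$, $\varphi_E : E_{H_1} \to E_{H_2}$, and $\varphi_T : T_1 \to T_2$ such that $\varphi_E$ preserves endpoints of dyadic edges, $\varphi_T$ preserves positional membership (i.e.\ $t_i = v \iff \varphi_T(t)_i = \varphi_V(v)$), and all three preserve features.

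Then I would define the combined map $\psi : V_{G_1} \to V_{G_2}$ on the disjoint union $V_{H_1} \cup E_{H_1} \cup T_1$ by $\psi|_{V_{H_1}} = \varphi_V$, $\psi|_{E_{H_1}} = \varphi_E$, and $\psi|_{T_1} = \varphi_T$, then verify four items in order: (i) $\psi$ preserves the node-type function $\tau$ by construction; (ii) it preserves the $\texttt{\_adjacency}$ incidences because $\varphi_E$ preserves edge endpoints, so $(v,e), (e,v) \in E_{G_1}$ iff their $\psi$-images lie in $E_{G_2}$; (iii) it preserves every indexed $(\texttt{\_node\_membership}, i)$ edge since $\varphi_T$ preserves tuple positions; and (iv) it preserves features via $x_{G_j}(u) = x_{H_j}(u)$. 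Composing these yields $G_1 \cong G_2$. For the reverse direction I would take any heterogeneous-graph isomorphism $\psi$, use its type-preservation to split it into three restricted bijections, recover the dyadic edge relation through the $\texttt{\_adjacency}$ incidences bridged by $\texttt{\_edge}$-typed intermediaries, and rebuild each tuple by reading off the positions $i$ from the preserved indexed labels.

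The main obstacle, and the reason this statement is not a mere corollary of the hypergraph case, is the ordering of node-tuples: had the lowering used a single untyped $\texttt{\_node\_membership}$ relation, $L_N^\top$ would only be well-defined up to permutation and the reverse implication $G_1 \cong G_2 \Rightarrow H_1 \cong H_2$ could fail whenever two distinct ordered tuples share an underlying set (e.g.\ $(v_1,v_2)$ versus $(v_2,v_1)$). The index $i$ baked into the edge label is precisely what restores injectivity, and the whole argument hinges on $\psi$ preserving the full labels $(\texttt{\_node\_membership}, i)$ exactly rather than merely their first component. A secondary technicality to dispatch is that a single vertex may occupy several positions of the same tuple, so the lowered object is formally a multigraph; one has to check that $\psi$ respects edge multiplicity, but this follows automatically once each $i$-annotated edge is preserved individually.
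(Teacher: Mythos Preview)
Your proposal is correct and follows essentially the same approach as the paper: the paper's proof is a one-sentence remark that the argument is identical to Theorem~\ref{thm:hgproof} after replacing the two vertex classes \texttt{\_node}, \texttt{\_hyperedge} by the three classes \texttt{\_node}, \texttt{\_edge}, \texttt{\_node\_tuple} and the single tag \texttt{\_incidence} by \texttt{\_adjacency} and $(\texttt{\_node\_membership},i)$. Your write-up simply spells out in full what the paper leaves implicit, and your observations about the indexed label being essential for recovering tuple order and about the multigraph subtlety are sound elaborations that the paper does not bother to state.
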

\begin{proof}
  Similar to the proof of theorem~\ref{thm:hgproof}: we replace the two vertex classes
  \texttt{\_node},\,\texttt{\_hyperedge} by the three classes
  \texttt{\_node},\,\texttt{\_edge},\,\texttt{\_node\_tuple},
  and replace the incidence tag \texttt{\_incidence} by the tags \texttt{\_adjacency} and $(\texttt{\_node\_membership},i)$.
\end{proof}

\sloppy The transformation is shown visually in the middle row of Figure~\ref{fig:scheme}. The lowered heterogeneous graph $G$ has $n+m+n_T$ vertices and $2m+\Sigma_{t} n_t \in O(m+nn_T)$  edges. Thus including $d$ features for each entity, the time complexity of the lowering and lifting is $O(d(m+nn_T))$.

\begin{enumerate}
    \item \emph{Insert node} $v_0$: \\
    $V_G \gets V_G \cup \{v_0\}$
    \item \emph{Delete node} $v_0$: \\
    $H_{v_0} \gets \{e \mid (v_0, e)\in E_G, \tau_e = \texttt{\_edge}\}$ \\
    $T_{v_0} \gets \{ (v_0, t) \mid (v_0, t) \in E_G, \tau_t=\texttt{\_node\_tuple} \}$\\
    $E_G \gets E_G \setminus (\{ (v_0, e), (e, v_0) \mid e \in H_{v_0} \} \cup T_{v_0})$ \\
    $V_G \gets V_G \setminus (\{v_0\} \cup H_{v_0})$
    \item \emph{Insert edge} $e_0$: \\
    $V_G \gets V_G \cup \{e_0\}$ \\
    $E_G \gets E_G \cup \{ (v, e_0), (e_0, v) \mid v \in e_0 \}$ 
    \item \emph{Delete edge} $e_0$: \\
    $E_G \gets E_G \setminus \{ (v, e_0), (e_0, v) \mid (e_0, v) \in E_G \}$ \\
    $V_G \gets V_G \setminus \{e_0\}$
    \item \emph{Insert tuple} $t_0$: \\
    $V_G \gets V_G \cup \{t_0\}$ \\
    $E_G \gets E_G \cup \{ (v, t_0) \mid t_{0i} = v,\, 0 \le i < n_{t_0} \}$
    \item \emph{Delete tuple} $t_0$: \\
    $E_G \gets E_G \setminus \{ (v, t_0) \mid (v, t_0) \in E_G \}$ \\
    $V_G \gets V_G \setminus \{t_0\}$ 
\end{enumerate}

\subsection{Graphs with Subgraph Collections}

For an undirected graph with a subgraph collection $H=(V_H, E_H, S, F, x_H)$, we construct a 4-partite heterogeneous graph $G=(V_G, E_G, \tau, \kappa, x_G)$ by the lowering $L^\top_S$:
\begin{align*}
    V_G &= V_H \cup E_H \cup S \cup F, \\
    E_G &= \{ (c, d), (d, c) \mid c \in d,\, c \in V_H,\, d \in E_H \} \\
    &\cup \{ (c, s) \mid c \in V_s,\, c \in V_H,\, s \in S \} \\
    &\cup \{ (e, s) \mid e \in E_s,\, e \in E_H,\, s \in S \} \\
    &\cup \{ (s, f), (f, s) \mid s \in f,\, s \in S,\, f \in F \}, \\
    \tau_u &=
    \begin{cases}
        \texttt{\_node} &\text{if } u \in V_H, \\
        \texttt{\_edge} &\text{if } u \in E_H, \\
        \texttt{\_subgraph} &\text{if } u \in S, \\
        \texttt{\_subgraph\_edge} &\text{if } u \in F,
    \end{cases} \\
    \kappa_{(u,v)} &=
    \begin{cases}
        \texttt{\_adjacency} &\text{if } (u,v) \in V_H \times E_H, \\
        \texttt{\_adjacency} &\text{if } (v,u) \in V_H \times E_H, \\
        \texttt{\_node\_membership} &\text{if } (u,v) \in V_H \times S, \\
        \texttt{\_edge\_membership} &\text{if } (u,v) \in E_H \times S, \\
        \texttt{\_subgraph\_adjacency} &\text{if } (u,v) \in S \times F, \\
        \texttt{\_subgraph\_adjacency} &\text{if } (v,u) \in F \times S,
    \end{cases} \\
    x_G(u) &= x_H(u) \quad \forall u \in V_G.
\end{align*}

We recover $H$ from $G$ via lifting $L_S$:
\begin{align*}
    V_H &= \{ v \in V_G \mid \tau_v = \texttt{\_node} \}, \\
    E_H &= \{ (u, v) \mid u,v \in V_H,\, \exists e \in V_G,\, \tau_e = \texttt{\_edge}, \\
    &\hspace{3em} (u,e) \in E_G,\, (e,v) \in E_G \}, \\
    S &= \Big\{ \big( \{ v \in V_H \mid (v,s) \in E_G \},\, \\
    &\hspace{3em} \{ e \in E_H \mid (e,s) \in E_G \} \big) \mid s\in V_G, \tau_s = \texttt{\_subgraph} \Big\}, \\
    F &= \{ (u,v) \mid u,v \in S,\, \exists e \in V_G,\, \tau_e = \texttt{\_subgraph\_edge}, \\
    &\hspace{3em} (u,e) \in E_G,\, (e,v) \in E_G \}, \\
    x_H(u) &= x_G(u) \quad \forall u \in V_H \cup E_H \cup S \cup F.
\end{align*}

\begin{theorem}
    $L^\top_S$ and $L_S$ are isomorphism-preserving.
\end{theorem}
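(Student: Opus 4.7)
The plan is to mirror the proofs given for Theorems on $L^\top_H$ and $L^\top_N$, exploiting the fact that the subgraph-collection case is structurally the same bipartite/tripartite encoding argument extended to four entity classes: \texttt{\_node}, \texttt{\_edge}, \texttt{\_subgraph}, and \texttt{\_subgraph\_edge}. The isomorphism between higher-order graphs $H_1 \cong H_2$ is witnessed by four bijections $\varphi_V, \varphi_E, \varphi_S, \varphi_F$ that jointly preserve (i) edge incidence on $V_H$, (ii) node- and edge-membership in each subgraph, (iii) subgraph-edge incidence on $S$, and (iv) all feature vectors. The lowered heterogeneous isomorphism $\psi$ is likewise a bijection $V_{G_1} \to V_{G_2}$ that preserves both the node-type function $\tau$ and the edge-type function $\kappa$. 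The proof proceeds in the two standard directions.

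For the forward direction $H_1 \cong H_2 \Rightarrow G_1 \cong G_2$, I would assemble
\[
\psi(u) =
\begin{cases}
\varphi_V(u) & \text{if } u \in V_{H_1},\\
\varphi_E(u) & \text{if } u \in E_{H_1},\\
\varphi_S(u) & \text{if } u \in S_1,\\
\varphi_F(u) & \text{if } u \in F_1,
\end{cases}
\]
and check case-by-case that every edge family in the definition of $E_{G_1}$ is carried to the corresponding family in $E_{G_2}$ with the matching $\kappa$-label. Concretely: the \texttt{\_adjacency} edges follow from $\varphi_V, \varphi_E$ preserving endpoint incidence exactly as in the node-tuple proof; the \texttt{\_node\_membership} and \texttt{\_edge\_membership} edges follow from $\varphi_S$ being an isomorphism of subgraphs, so $v \in V_s \iff \varphi_V(v) \in V_{\varphi_S(s)}$ and analogously for edges; and the \texttt{\_subgraph\_adjacency} edges follow from $\varphi_F$ respecting subgraph-edge endpoints in $S \times S$. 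Feature preservation is immediate from $x_G(u) = x_H(u)$ on all four classes.

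For the reverse direction $G_1 \cong G_2 \Rightarrow H_1 \cong H_2$, I would use the fact that $\psi$ preserves $\tau$, so it splits canonically into four restricted bijections $\varphi_V = \psi|_{V_{H_1}}$, $\varphi_E = \psi|_{E_{H_1}}$, $\varphi_S = \psi|_{S_1}$, $\varphi_F = \psi|_{F_1}$. Preservation of $\kappa$-labeled adjacency then forces each restriction to preserve the semantic relation it encodes: \texttt{\_adjacency} recovers the original edge endpoints, \texttt{\_node\_membership} and \texttt{\_edge\_membership} recover the vertex and edge sets of each subgraph, and \texttt{\_subgraph\_adjacency} recovers subgraph-edge endpoints. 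Finally, isomorphism-preservation of $L_S$ follows from Theorem~\ref{thm:loss} and the fact that $L_S$ and $L^\top_S$ are mutual inverses.

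The main obstacle I anticipate is the subgraph-recovery step: unlike the hypergraph case, where a hyperedge-vertex stores incidence in a single edge class, a subgraph-vertex in $G$ encodes \emph{two} relations (a vertex set via \texttt{\_node\_membership} and an edge set via \texttt{\_edge\_membership}) that must jointly reproduce the pair $(V_s, E_s)$. I would therefore argue carefully that $\psi$, by preserving both membership types simultaneously, induces an isomorphism of the pair $(V_s, E_s) \mapsto (V_{\psi(s)}, E_{\psi(s)})$ for every $s \in S_1$, and that the internal consistency condition $E_s \subseteq V_s \times V_s$ is automatically transported because $\varphi_V$ is already known to respect \texttt{\_adjacency}. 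Once this compatibility is established, the rest of the argument is a direct analogue of the hypergraph proof and can be compressed by invoking its structure.
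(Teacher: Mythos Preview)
Your proposal is correct and takes essentially the same approach as the paper, which simply states that the proof is similar to the hypergraph case (Theorem~\ref{thm:hgproof}); you have spelled out exactly that analogy with the four type classes and the corresponding $\kappa$-labeled edge families. One minor point: the final step for $L_S$ is really the symmetry argument from mutual inverses (as in the paper's hypergraph proof), not a consequence of Theorem~\ref{thm:loss}, which concerns losslessness rather than isomorphism preservation.
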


\sloppy The proof is similar to theorem~\ref{thm:hgproof}. The transformation is depicted visually in the bottom row of Figure~\ref{fig:scheme}. The lowered heterogeneous graph $G$ has $n+m+n_S+n_F$ vertices and $2m+\Sigma_{s} (n_s+m_s) \in O(m+nn_S+mn_S+n_F)$ edges. Thus including $d$ features for each entity, the time complexity of the lowering and lifting are each $O(d(n_F+nn_S+mn_S))$. Graph updates can be performed by the following operations:

\begin{enumerate}
    \item \emph{Insert node} $v_0$: \\
    $V_G \gets V_G \cup \{v_0\}$
    \item \emph{Delete node} $v_0$: \\
    $H_{v_0} \gets \{e \mid (v_0, e)\in E_G, \tau_e = \texttt{\_edge}\}$ \\
    $S_{v_0} \gets \{ (v, s) \mid v\in \{v_0\}\cup H_{v_0}, (v, s) \in E_G, \tau_s=\texttt{\_subgraph} \}$\\
    $E_G \gets E_G \setminus (\{ (v,e), (e,v) \mid e \in H_{v_0}, v\in V_G \} \cup S_{v_0})$ \\
    $V_G \gets V_G \setminus (\{v_0\} \cup H_{v_0})$
    \item \emph{Insert edge} $e_0$: \\
    $V_G \gets V_G \cup \{e_0\}$ \\
    $E_G \gets E_G \cup \{ (v,e_0), (e_0,v) \mid v \in e_0 \}$
    \item \emph{Delete edge} $e_0$: \\
    $S_{e_0} \gets \{ (e_0, s) \mid (e_0, s) \in E_G, \tau_s=\texttt{\_subgraph} \}$\\
    $E_G \gets E_G \setminus (\{ (v,e_0), (e_0,v) \mid (e_0,v) \in E_G \}\cup S_{e_0})$ \\
    $V_G \gets V_G \setminus \{e_0\}$
    \item \emph{Insert subgraph} $s_0$: \\
    $V_G \gets V_G \cup E_s \cup \{s_0\}$ \\
    $E_G \gets E_G \cup \{ (v,s_0) \mid v \in V_{s_0} \} \cup \{ (e,s_0) \mid e \in E_{s_0} \}$
    \item \emph{Delete subgraph} $s_0$: \\
    $H_{s_0} \gets \{f \mid (s_0, f)\in E_G, \tau_f = \texttt{\_subgraph\_edge}\}$ \\
    $S_{s_0}\gets \{(v, s_0)\mid (v, s_0)\in E_G, \tau_v\in\{\texttt{\_node}, \texttt{\_edge}\}\}$\\
    $E_G \gets E_G \setminus (\{ (f,s), (s, f) \mid f \in H_{s_0}, s\in V_G \} \cup S_{s_0})$ \\
    $V_G \gets V_G \setminus (\{s_0\} \cup H_{s_0})$ 
    \item \emph{Insert subgraph-edge} $f_0$: \\
    $V_G \gets V_G \cup \{f_0\}$ \\
    $E_G \gets E_G \cup \{ (s,f_0), (f_0,s) \mid s \in f_0\}$
    \item \emph{Delete subgraph-edge} $f_0$: \\
    $E_G \gets E_G \setminus \{ (s,f_0), (f_0,s) \mid (s,f_0) \in E_G \}$ \\
    $V_G \gets V_G \setminus \{f_0\}$ 
\end{enumerate}

\fi

\end{document}